\newtheorem{theorem}{Theorem}
\newtheorem{lemma}{Lemma}
\newtheorem{corollary}{Corollary}
\newtheorem{proposition}{Proposition}
\newcommand{\nsn}{\textup{\texttt{ns}}}  % node search number
\newcommand{\pw}[1]{\textup{\texttt{pw}}(#1)}
\newcommand{\costStrategy}[1]{\gamma(#1)}
\newcommand{\cR}{\mathcal{R}}
\newcommand{\cS}{\mathcal{S}}
\newcommand{\cP}{\mathcal{P}}
\newcommand{\cI}{\mathcal{I}}
\newcommand{\st}{\hspace{0.1cm}\bigl|\bigr.\hspace{0.1cm}}
\newcommand{\card}[1]{\left|#1\right|}
\newcommand{\reals}{\mathbb{R}}
\newcommand{\representation}{\cR}
\newcommand{\leftendpoint}[1]{\textup{left}(#1)}
\newcommand{\rightendpoint}[1]{\textup{right}(#1)}
\newcommand{\cliqueSize}[1]{\omega(#1)}     % width measure for a chordal supergraph
\newcommand{\treewidth}[1]{\textup{\texttt{tw}}(#1)}
\newcommand{\fillin}[1]{\textup{\texttt{fill-in}}(#1)}
\newcommand{\icost}{ic}    % for semi-canonical representations and graphs
\newcommand{\wid}{w}       % for interval representations and interval graphs
\newcommand{\iwid}{iw}     % for any graph
\newcommand{\profileG}[1]{p(#1)}
\newcommand{\profileFG}[1]{p_f(#1)}
\newcommand{\AlgIntervalCompletion}{\texttt{IC}}
\newcommand{\problemPP}{\texttt{PPM}}
\newcommand{\problemTF}{\texttt{TFM}}
\begin{document}

\title{On tradeoffs between width- and fill-like graph parameters}

\author[1]{Dariusz Dereniowski\footnote{Author partially supported by National Science Centre (Poland) grant number 2015/17/B/ST6/01887. Email: deren@eti.pg.edu.pl}}

\author[2]{Adam Sta\'{n}ski\footnote{This research has been conducted prior to the employment of this author in Amazon. Email: stanskia@amazon.com}}

\affil[1]{Faculty of Electronics, Telecommunications and Informatics, \mbox{Gda{\'n}sk University of Technology}, Gda{\'n}sk, Poland}

\affil[2]{Amazon.com, Gda{\'n}sk, Poland}

\date{}

\maketitle

\begin{abstract}
In this work we consider two two-criteria optimization problems: given an input graph, the goal is to find its interval (or chordal) supergraph that minimizes the number of edges and its clique number simultaneously.
For the interval supergraph, the problem can be restated as simultaneous minimization of the pathwidth $\pw{G}$ and the profile $\profileG{G}$ of the input graph $G$.
We prove that for an arbitrary graph $G$ and an integer $t\in\{1,\ldots,\pw{G}+1\}$, there exists an interval supergraph $G'$ of $G$ such that for its clique number it holds $\omega(G')\leq(1+\frac{2}{t})(\pw{G}+1)$ and the number of its edges is bounded by $|E(G')|\leq(t+2)\profileG{G}$.
In other words, the pathwidth and the profile of a graph can be simultaneously minimized within the factors of $1+\frac{2}{t}$ (plus a small constant) and $t+2$, respectively.
Note that for a fixed $t$, both upper bounds provide constant factor approximations.
On the negative side, we show an example that proves that, for some graphs, there is no solution in which both parameters are optimal.

In case of finding a chordal supergraph, the two corresponding graph parameters that reflect its clique size and number of edges are the treewidth and fill-in.
We obtain that the treewidth and the fill-in problems are also `orthogonal' in the sense that for some graphs, a solution that minimizes one of those parameters cannot minimize the other.
As a motivating example, we recall graph searching games which illustrates a need of simultaneous minimization of these pairs of graph parameters.
\end{abstract}

\textbf{Keywords:} fill-in, graph searching, node search, pathwidth, profile, treewidth

\section{Introduction} \label{sec:introduction}

Multi-criteria optimization problems can be of interest for several reasons, including theoretical insights their study provides or potential practical applications.
The selection of the parameters to be simultaneously optimized is dictated by those and can lead to challenging research questions.
Our selection is motivated in two ways.
First, the choice of the parameters themselves is made according to their importance in graph theory and algorithm design.
Second, we paired the parameters according to a potential application that we describe in detail.
The first pair of parameters that we minimize is the pathwidth and profile, which can be viewed as computations of linear graph layouts of certain characteristics.
The second pair is the treewidth and fill-in, which is a tree-like graph layout counterpart of the former.

\subsection{Related work}
We point out several optimization problems in which pathwidth or treewidth is paired with another parameter or with additional conditions that need to be satisfied.
For an example consider a problem of computing a path decomposition with restricted width and length (defined as the number of bags in the path decomposition).
It has been first studied in \cite{ADGJT11} as a problem motivated by an industrial application and called the \emph{partner unit problem} but finds applications also in scheduling a register allocation~\cite{Sethi75} or graph searching games \cite{DereniowskiKZ15}.

The treewidth counterpart of the `length' minimization problem can be defined two-fold.
It can be seen as minimizing, besides of the width of a tree decomposition, the number of its bags \cite{LiMNS15}.
On the other hand, minimization of width and the diameter of the underlying tree-structure of the decomposition has been studied in \cite{Bodlaender89,BodlaenderH98}.

Pathwidth or treewidth parameters have been also studied with additional constraints which can be most generally stated as requiring certain connectivity structures to be induced by the bags.
These include the parameter of connected pathwidth introduced in~\cite{BFFS02} in the context of graph searching games and studied further e.g. in~\cite{BarriereFFFNST12,Dereniowski12SIDMA}.
Another examples are connected treewidth \cite{connected_treewidth}, or bounded diameter tree decompositions~\cite{Bodlaender88,H98}.

Some more examples of very closely related two-criteria problems can be found in the graph searching games; see e.g.~\cite{BGHK95,DereniowskiD13,KloksBodlaender92}.

\subsection{Outline}
This work mostly deals with simultaneous minimization of width-like (namely pathwidth and treewidth) and fill-like (namely profile and fill-in) graph parameters.
In order to state our results for pathwidth and profile formally, we introduce the necessary notation in Section~\ref{sec:preliminaries}.
For pathwidth and profile we give an upper bound (to be precise, a class of upper bounds that results in a tradeoff between the two parameters) in Section~\ref{sec:tradeoff} (Theorem~\ref{thm:tradeoff}) and, in Section~\ref{sec:orthogonal}, an example that shows that the two cannot be simultaneously minimized in general (Theorem~\ref{thm:orthogonal-pw}).
The latter example also is valid for the tradeoffs between the two corresponding parameters, treewidth and fill-in and for this reason we introduce the two in Section~\ref{sec:treewidth-fill-in} and state this result as a corollary (Corollary~\ref{cor:orthogonal-tw}).
Section~\ref{sec:applications} recalls two classical graph searching problems which serve as an example that illustrates a case in which it is natural to optimize the two selected pairs of parameters.
These connections are summarized there in Theorems~\ref{thm:searching-pathwidth} and~\ref{thm:searching-treewidth}.
Thus, this part of the work serves as an additional motivation for this research.

\section{Preliminaries} \label{sec:preliminaries}
We start with recalling some basic graph-theoretic terms used in this work.
For a graph $G$, we write $V(G)$ and $E(G)$ to denote the sets of its vertices and edges, respectively.
We say that a graph $G'$ is a \emph{subgraph} of a graph $G$ (and in such case $G$ is a \emph{supergraph} of $G'$) if $V(G')\subseteq V(G)$ and $E(G')\subseteq E(G)$.
Moreover, $G'$ is a subgraph of $G$ \emph{induced} by $X\subseteq V(G)$ and denoted $G[X]$ (or $G'$ is an \emph{induced subgraph} of $G$ for short) if $V(G')=X$ and $E(G')=\{\{u,v\}\in E(G)\st u,v\in X\}$.
A \emph{clique} is a graph in which any two vertices are adjacent.
For a vertex $v$ of a graph $G$, $N_G(v)$ is the set of neighbors of $v$ in $G$.

\medskip
We now recall the graph parameters studied in this work.
For a a permutation $f\colon V(G)\to\{1,\ldots,|V(G)|\}$ of the vertices of $G$, define
\[\profileFG{G} = \sum_{v\in V(G)} \left( f(v) - \min_{u\in \{v\}\cup N_G(v)} f(u) \right).\]
Informally, $f(v) - \min_{u\in \{v\}\cup N_G(v)} f(u)$ can be interpreted as the maximum distance, according to the permutation $f$, between $v$ and its neighbors appearing in the permutation prior to $v$ (if all neighbors of $v$ are ordered in $f$ after $v$, then this difference is by definition zero).
Then, a \emph{profile} of a graph $G$ \cite{GPW74}, denoted by $\profileG{G}$, is defined as
\begin{equation} \label{eq:profile-def}
\profileG{G} = \min\left\{ \profileFG{G} \st f \textup{ is a permutation of } V(G)\right\}.
\end{equation}

A \emph{path decomposition} of a simple graph $G=(V(G),E(G))$ is a sequence $\cP=(X_1,\ldots,X_d)$, where $X_i\subseteq V(G)$ for each $i=1,\ldots,d$, and
\begin{itemize}
 \item[$\circ$] $\bigcup_{i=1,\ldots,d}X_i=V(G)$,
 \item[$\circ$] for each $\{u,v\}\in E(G)$ there exists $i\in\{1,\ldots,d\}$ such that $u,v\in X_i$,
 \item[$\circ$] for each $i,j,k$, $1\leq i\leq j\leq k\leq d$ it holds $X_i\cap X_k\subseteq X_j$.
\end{itemize}
The width of $\cP$ equals $\max_{i=1,\ldots,d}|X_i|-1$, and the \emph{pathwidth} of $G$, denoted by $\pw{G}$, is the smallest width of all path decompositions of $G$.

\subsection{Interval graphs} \label{subsec:interval}

A graph $G$ is an \emph{interval graph} if and only if for each $v\in V(G)$ there exists an interval $I_v=(l_v,r_v)$ such that for each edge $u,v\in V(G)$ it holds: $\{u,v\}\in E(G)$ if and only if $I_u\cap I_v\neq\emptyset$.
The collection $\cI=\{I_v\st v\in V(G)\}$ is called an \emph{interval representation of} $G$.
An interval representation $\cI$ of $G$ is said to be \emph{canonical} if the endpoints of $I_v$ are integers for each $v\in V(G)$ and $\{l_v\st v\in V(G)\}=\{1,\ldots,n\}$.
This in particular implies that the left endpoints are pairwise different.
%It is known that for each interval graph there exists its canonical representation \cite{??}.
Denote by $\representation(G)$ the set of all canonical interval representations of $G$.
We will write $\representation(G)$ for a graph $G$ that is not an interval graph to denote the set $\bigcup_{G'\in X}\representation(G')$, where $X$ is the set of all interval supergraphs of $G$ with the same vertex set as $G$.
If $\cI$ is an interval representation of an interval graph $G$ and $v\in V(G)$, then $\cI(v)$ denotes the interval in $\cI$ that corresponds to $v$.
For any interval $I$, we write $\leftendpoint{I}$ and $\rightendpoint{I}$ to denote its left and right endpoint, respectively.
Note that we consider without loss of generality only open intervals in the interval representations.

Let $G$ be an interval graph.
Given an interval representation $\cI$ of $G$ and an integer $i$, define
\[m_i(\cI)=\card{\{I\in\cI\st i\in I\}}\]
to be the cardinality of the set of all intervals that contain the point $i$.
Let $I_1,\ldots,I_n$ be the intervals in $\cI$. % sorted so that $\leftendpoint{I_i}\leq\leftendpoint{I_{i+1}}$ for each $i\in\{1,\ldots,n-1\}$.
Then, let $f_i(\cI)=m_j(\cI)$, where $j=\leftendpoint{I_i}$, $i\in\{1,\ldots,n\}$.
In other words, $f_i(\cI)$ is the number of intervals in $\cI$ containing the point $\leftendpoint{I_i}$.
(Note that $\leftendpoint{I_i}\notin I_i$ and hence $I_i$ does not contribute to the value of $f_i(\cI)$.)

Given a canonical interval representation $\cI$ of an interval graph $G$, define the \emph{interval cost} of $\cI$ as
\[\icost(\cI)=\sum_{i=1}^n f_i(\cI),\quad\textup{ where }n=\card{\cI}.\]
It turns out that $\icost(\cI)$ equals the number of edges of the interval graph with interval representation $\cI$ (see e.g. \cite{interval_completion}).
For a graph $G$, we define its \emph{interval cost} as
\[\icost(G)=\min\left\{\icost(\cI)\st\cI\in\representation(G)\right\}.\]

The next fact follows from \cite{Billionnet86} and \cite{interval_completion}.
\begin{proposition} \label{pro:interval-profile}
Let $G$ be any graph and let $k$ be an integer.
The following inequalities are equivalent:
\begin{enumerate}[label={\normalfont(\roman*)},leftmargin=*]
 \item $\card{E(G')}\leq k$, where $G'$ is an interval supergraph of $G$ having the minimum number of edges,
 \item $\profileG{G}\leq k$.
\end{enumerate}
\end{proposition}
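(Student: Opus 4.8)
The plan is to reduce the equivalence to the single numerical identity $\icost(G)=\profileG{G}$. Indeed, by the fact recalled just before the proposition, $\icost(\cI)$ equals the number of edges of the interval graph represented by $\cI$, so $\icost(G)=\card{E(G')}$ for a minimum-edge interval supergraph $G'$ of $G$; and $\profileG{G}$ is by definition $\min_f\profileFG{G}$ over permutations $f$ of $V(G)$. Hence $(i)\Leftrightarrow(ii)$ for every $k$ is exactly $\icost(G)=\profileG{G}$, and I would prove this identity by two inequalities, each time moving explicitly between a permutation of $V(G)$ and a canonical interval representation in $\representation(G)$. Throughout, write $n=\card{V(G)}$.

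For $\icost(G)\le\profileG{G}$: take $f$ attaining $\profileG{G}$, put $g(v)=n+1-f(v)$, and assign to $v$ the open interval $\cI(v)=\bigl(g(v),\,1+\max_{u\in\{v\}\cup N_G(v)}g(u)\bigr)$. First I would check that $\cI$ is canonical — integer endpoints, left endpoints exactly $\{1,\dots,n\}$, and every right endpoint at most $n+1$ — and that every edge $\{u,v\}$ of $G$ forces $\cI(u)\cap\cI(v)\ne\emptyset$, so $\cI$ represents an interval supergraph of $G$ on vertex set $V(G)$ and therefore lies in $\representation(G)$. Then the key computation: since the left endpoints are precisely $\{1,\dots,n\}$ and no right endpoint exceeds $n+1$, summing over $w$ the number of intervals strictly containing $\leftendpoint{\cI(w)}$ rearranges (for each $w$, count the left endpoints lying strictly inside $\cI(w)$) to
\[\icost(\cI)=\sum_{w\in V(G)}\Bigl(\rightendpoint{\cI(w)}-1-\leftendpoint{\cI(w)}\Bigr)=\sum_{w\in V(G)}\Bigl(\max_{u\in\{w\}\cup N_G(w)}g(u)-g(w)\Bigr)=\sum_{w\in V(G)}\Bigl(f(w)-\min_{u\in\{w\}\cup N_G(w)}f(u)\Bigr)=\profileG{G},\]
which yields the inequality.

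For $\profileG{G}\le\icost(G)$: let $G'$ be a minimum-edge interval supergraph of $G$ and fix a canonical representation $\cI$ of $G'$; I may assume every right endpoint is at most $n+1$, since truncating a right endpoint down to one plus the largest left endpoint below it removes and creates no intersection. Define $f(v)=n+1-\leftendpoint{\cI(v)}$. Running the same rearrangement, now for $G'$ itself and using that in a canonical representation every integer in $\{1,\dots,n\}$ is a left endpoint (so $\max\{\leftendpoint{\cI(u)}\st u\in\{v\}\cup N_{G'}(v)\}=\rightendpoint{\cI(v)}-1$), gives $\profileFG{G'}=\icost(\cI)=\card{E(G')}=\icost(G)$. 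Since $E(G)\subseteq E(G')$ implies $N_G(v)\subseteq N_{G'}(v)$ for every $v$, we get $\min_{u\in\{v\}\cup N_G(v)}f(u)\ge\min_{u\in\{v\}\cup N_{G'}(v)}f(u)$ termwise, hence $\profileG{G}\le\profileFG{G}\le\profileFG{G'}=\icost(G)$. Combining the two inequalities gives $\icost(G)=\profileG{G}$, and with it the proposition.

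I expect the only genuine friction to be endpoint bookkeeping. Because the representations use open intervals, one must be careful that the construction in the first inequality really produces a \emph{supergraph} of $G$ and a representation satisfying all the \emph{canonical} conditions, and that the counting identity ``the number of left endpoints strictly inside an interval $I$ equals $\rightendpoint{I}-1-\leftendpoint{I}$'' is legitimate — which is exactly what the normalization ``all right endpoints at most $n+1$'' buys. Once those two points are pinned down, both directions are a single double-counting argument.
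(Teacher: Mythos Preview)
Your argument is correct. The identity $\icost(G)=\profileG{G}$ is exactly what needs to be shown, and both directions go through: in the first, the intervals $(g(v),\,1+\max_{u\in\{v\}\cup N_G(v)}g(u))$ indeed form a canonical representation in $\representation(G)$ with all right endpoints at most $n+1$, and the double count $\icost(\cI)=\sum_w(\rightendpoint{\cI(w)}-1-\leftendpoint{\cI(w)})$ is valid precisely because every integer in $\{1,\dots,n\}$ is a left endpoint and no right endpoint exceeds $n+1$. In the second direction, the truncation to right endpoints $\le n+1$ is harmless since all left endpoints lie in $\{1,\dots,n\}$, and the claim $\max\{\leftendpoint{\cI(u)}:u\in\{v\}\cup N_{G'}(v)\}=\rightendpoint{\cI(v)}-1$ holds because the vertex whose left endpoint is $\rightendpoint{\cI(v)}-1$ necessarily has its interval meeting $\cI(v)$.

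There is nothing to compare against in the paper itself: the proposition is not proved there but simply attributed to \cite{Billionnet86} and \cite{interval_completion}. Your write-up is a self-contained version of the standard argument behind those references, so you have supplied more than the paper does.
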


Let $G$ be an interval graph and let $\cI\in\representation(G)$.
We define the \emph{width of $\cI$} as $\wid(\cI)=\max\left\{m_i(\cI)\st i\in\reals \right\}$.
%Then, the \emph{width} of and interval graph $G'$ is $\wid(G')=\min\{w(\cI)\st\cI\in\representation(G')\}$.
The \emph{interval width} of any simple graph $G$ is then
\[\iwid(G)=\min\left\{\wid(\cI)\st \cI\in\representation(G)\right\}.\]

We have the following fact \cite{Kinnersley92,KirousisPapadimitriou85,searching_and_pebbling,LaPaugh93,Mohring90}:
\begin{proposition} \label{pro:interval-pathwidth}
Let $G$ be any graph and let $k$ be an integer.
The following inequalities are equivalent:
\begin{enumerate}[label={\normalfont(\roman*)},leftmargin=*]
 \item $\iwid(G)\leq k$,
 \item $\pw{G}\leq k-1$.
\end{enumerate}
\end{proposition}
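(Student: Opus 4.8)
The plan is first to rephrase both sides of the equivalence as a statement about interval supergraphs of $G$ on the same vertex set, and then to prove that reformulated equivalence by the two well-known translations between path decompositions and interval supergraphs. Two observations drive the reformulation. First, for \emph{any} interval representation $\cI$ of \emph{any} interval graph $H$ one has $\wid(\cI)=\cliqueSize{H}$: if $m$ intervals of $\cI$ cover a common point they are pairwise intersecting and hence form a clique, so $m\le\cliqueSize{H}$; conversely a maximum clique consists of $\cliqueSize{H}$ pairwise intersecting intervals, which by the Helly property of intervals on the line have a common point, so $\wid(\cI)\ge\cliqueSize{H}$. Second, every interval graph admits at least one canonical representation. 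Combining these, $\iwid(G)=\min\{\cliqueSize{G'}\st G'\text{ is an interval supergraph of }G\text{ with }V(G')=V(G)\}$; thus (i) asserts the existence of such a $G'$ with $\cliqueSize{G'}\le k$, while (ii) asserts that $G$ has a path decomposition all of whose bags have size at most $k$. It then remains to show these two assertions are equivalent.

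For (ii)$\Rightarrow$(i): starting from a path decomposition $\cP=(X_1,\dots,X_d)$ of $G$ with $\card{X_i}\le k$ for every $i$, assign to each vertex $v$ the open interval $I_v=(\alpha(v)-1,\beta(v))$, where $\alpha(v)$ and $\beta(v)$ are the least and greatest indices of a bag containing $v$; by the third axiom of a path decomposition, $v\in X_i$ exactly when $\alpha(v)\le i\le\beta(v)$. Because all endpoints are integers, $I_u\cap I_v\ne\emptyset$ if and only if the index sets $\{\alpha(u),\dots,\beta(u)\}$ and $\{\alpha(v),\dots,\beta(v)\}$ intersect, i.e. if and only if some bag contains both $u$ and $v$; since every edge of $G$ lies in a bag, the interval graph $G'$ with representation $\cI'=\{I_v\st v\in V(G)\}$ is a supergraph of $G$ with $V(G')=V(G)$. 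For every real $t$ with $j-1<t<j$ the intervals covering $t$ are exactly those $I_v$ with $v\in X_j$, and elsewhere the covering number is no larger; hence $\cliqueSize{G'}=\wid(\cI')=\max_j\card{X_j}\le k$, so $\iwid(G)\le k$.

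For (i)$\Rightarrow$(ii): let $G'$ be an interval supergraph of $G$ with $V(G')=V(G)$ and $\cliqueSize{G'}\le k$, fix a canonical representation $\cI=\{I_v=(l_v,r_v)\st v\in V(G)\}$ of $G'$, and recall that the $l_v$ are distinct integers and each $r_v\ge l_v+1$. For every integer $j$ define $X_j=\{v\in V(G)\st l_v<j-\tfrac12<r_v\}$, the set of vertices whose interval contains the half-integer $j-\tfrac12$. Then $v\in X_j$ precisely for $l_v+1\le j\le r_v$, so each vertex lies in a contiguous block of bags; every vertex $v$ lies in $X_{l_v+1}$; and for $\{u,v\}\in E(G)\subseteq E(G')$ the intersection $I_u\cap I_v$ is a nonempty open interval with integer endpoints and therefore contains some half-integer $j-\tfrac12$, putting $u,v\in X_j$. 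Listing the bags in increasing order of $j$ thus yields a path decomposition of $G'$ — hence of $G$, since a path decomposition of a supergraph with the same vertex set is a path decomposition of the subgraph — of width $\max_j\card{X_j}-1\le\wid(\cI)-1=\cliqueSize{G'}-1\le k-1$, so $\pw{G}\le k-1$.

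None of the individual steps is difficult. The step that I expect to need the most care is the normalization in the first paragraph: one must check the identity $\wid(\cI)=\cliqueSize{H}$ and invoke the standard existence of canonical representations, so that passing between ``$\wid(\cI)\le k$ for some $\cI\in\representation(G)$'' and ``$G$ has an interval supergraph on the same vertices with clique number at most $k$'' is legitimate. Beyond that, the only subtlety in the two constructions is the bookkeeping forced by working with open intervals, which is why it is convenient to evaluate covering numbers at half-integers rather than at the integer endpoints. I do not anticipate a genuine obstacle once these normalizations are stated precisely.
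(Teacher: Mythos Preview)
Your argument is correct and carefully handles the open-interval convention and the canonicity requirement; in particular, routing both directions through the identity $\wid(\cI)=\cliqueSize{H}$ is the cleanest way to avoid having to produce a canonical representation explicitly in the (ii)$\Rightarrow$(i) step. Note, however, that the paper does not supply its own proof of this proposition at all: it merely records it as a known fact with references \cite{Kinnersley92,KirousisPapadimitriou85,searching_and_pebbling,LaPaugh93,Mohring90}, so there is nothing in the paper to compare your approach against beyond observing that you have written out the standard translation between path decompositions and interval supergraphs that those references establish.
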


\subsection{Problem formulation} \label{sec:problem_formulation}
For the purposes of this work we need an `uniform' formulation of the two graph-theoretic problems that we study, namely pathwidth and profile, in order to be able to formally apply the two optimization criteria to a single solution to a problem instance.
In view of Propositions~\ref{pro:interval-profile} and~\ref{pro:interval-pathwidth}, we can state the optimization version of our problem as follows:

\medskip
\noindent
\textbf{Problem $\problemPP$} (\emph{Pathwidth \& Profile Minimization}):
\begin{description}[labelindent=\parindent,noitemsep,topsep=0pt]
   \item[{\normalfont Input:}] a graph $G$, integers $k$ and $c$.
   \item[{\normalfont Question:}] does there exist an interval supergraph $G'$ of $G$ such that $\iwid(G')\leq k$ and $\card{E(G')}\leq c$?
\end{description}

\section{Pathwidth and profile tradeoffs} \label{sec:tradeoff}

In this section we prove that for an arbitrary graph $G$ there exists its interval supergraph $G'$ with width at most $(1+\frac{2}{t})(\pw{G}+1)$ and the number of edges at most $(t+2)\profileG{G}$ for each $t\in\{1,\ldots,\iwid(G)\}$.
This is achieved by providing a procedure that finds a desired interval supergraph (the procedure returns an interval representation of this supergraph).
Since the goal is to prove an upper bound and not to provide an efficient algorithm, this procedure relies on optimal algorithms for finding a minimum width and minimum cost interval supergraph of a given graph.
(The latter problems are NP-complete, see \cite{KashiwabaraF79,Lengauer81} and \cite{GareyJohnson79,YuanLLW98}.)
Therefore, the running time of this procedure is exponential.

\medskip
We first give some intuitions on our method.
We start by computing a `profile-optimal' (canonical) interval representation $\cI$ of some interval supergraph $G''$ of $G$, that is, $\icost(\cI)=\profileG{G}$.
Then, in the main loop of the procedure this initial interval representation is iteratively refined.
Each refinement targets an interval $(i,j)$ selected in such a way that the width of $\cI$ exceeds $k/t$ at each point in $(i,j)$, i.e., $m_{i'}(\cI'')>k/t$ for each $i'\in\{i,\ldots,j\}$ and $i,j$ are taken so that the interval is maximum with respect to this condition.
The refinement on $\cI$ in $(i,j)$ is done as follows (see the pseudocode below for detail and Figure~\ref{fig:interval} for an example):
\begin{itemize}
 \item intervals that cover $(i,j)$ entirely or have an empty intersection with it do not change (see Case~(i) in Figure~\ref{fig:interval}),
 \item intervals that contain one of $i$ or $j$ will be extended to cover entire interval, except that we ensure that they have pairwise different left endpoints as required in canonical representations (Cases~(ii) and~(iii) in Figure~\ref{fig:interval}),
 \item for the intervals that originally are contained in $(i,j)$, we recompute the interval representation; while doing so we take care of the following: first, the neighborhood relation in the initial graph is respected so that the new interval representation provides an interval supergraph of $G$, and second, the width of the new interval representation inside $(i,j)$ is minimal (Case~(iv) in Figure~\ref{fig:interval}).
\end{itemize}
The above refinement is performed for each interval $(i,j)$ that satisfies given conditions.
Each refinement potentially increases the interval cost of $\cI$ but narrows down its width appropriately in the interval for which the refinement is done.

\medskip
We now describe Procedure~$\AlgIntervalCompletion$ (\emph{Interval Completion}) that as an input takes any graph $G$ and an integer $t\in\{1,\ldots,\iwid(G)\}$, and returns an $\cI'\in\representation(G)$.
Then, in Lemma~\ref{lem:IC_is_correct}, we prove that the procedure is correct and in Theorem~\ref{thm:tradeoff} we estimate the width and interval cost of $\cI'$.
See Figure~\ref{fig:interval} for an example that illustrates the transition from the $\cI''$ to $\cI'$ in an iteration of the procedure.
\begin{figure}[htb]
\begin{center}
\centering
  \includegraphics[scale=0.8]{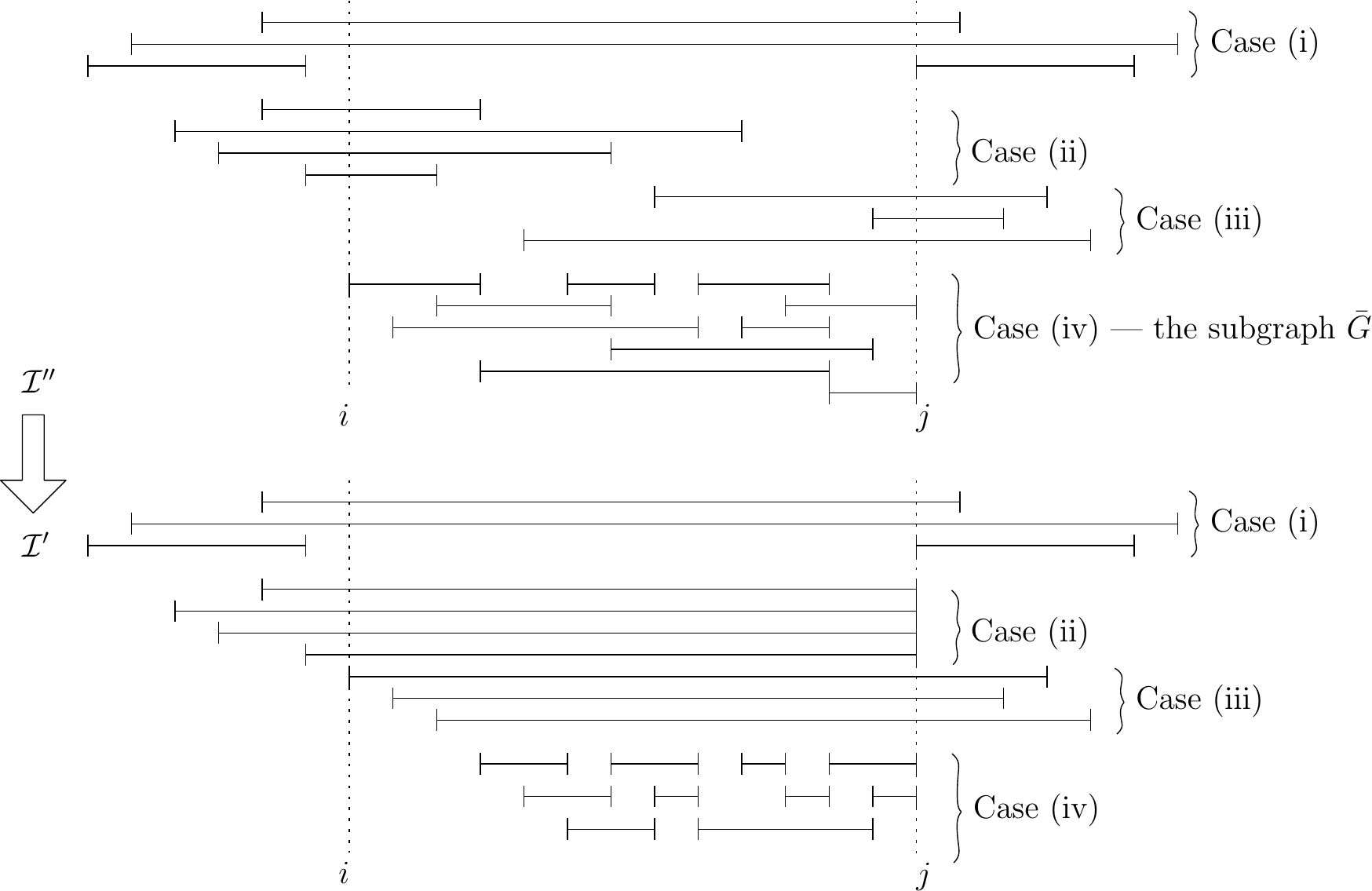}
\end{center}
\caption{A single iteration of Procedure~$\AlgIntervalCompletion$: transition from interval representation $\cI''$ to $\cI'$ with particular intervals marked according in which of the cases (i)-(iv) they are processed in the pseudocode}
\label{fig:interval}
\end{figure}

\begin{algorithm} \caption{$\AlgIntervalCompletion$ (\emph{Interval Completion})}
\begin{algorithmic}
  \REQUIRE A graph $G$ and an integer $t\in\{1,\ldots,\iwid(G)\}$.
  \ENSURE An interval representation of some interval supergraph of $G$.

  \STATE Compute an interval representation $\cI\in\representation(G)$, where $\icost(\cI)=\profileG{G}$.
  \STATE Set $k\leftarrow\iwid(G)$, $q\leftarrow 1$, $\cI'\leftarrow\cI$ and $\cI''\leftarrow\cI$.
  \WHILE{$m_{i'}(\cI'')>k/t$ for some $i'\geq q$}
    \STATE Find minimum integers $i,j$ such that $q\leq i<j$, $m_{i-1}(\cI'')\leq k/t$, $m_{j+1}(\cI'')\leq k/t$ and $m_{i'}(\cI'')>k/t$ for each $i'\in\{i,\ldots,j\}$.
    \STATE Set $q\leftarrow j+1$.
    \STATE Let $\bar{G}$ be the subgraph of $G$ induced by all vertices $v$ such that $\cI''(v)\subseteq(i,j)$.
    \STATE Let $\bar{\cI}$ be a minimum width canonical interval representation of some interval supergraph of $\bar{G}$.
    \STATE Construct an interval representation $\cI'$ of $G$ as follows:
    \STATE\hspace{\algorithmicindent} (i) Let initially $\cI'=\{I\in\cI''\st I\cap (i,j)=\emptyset\textup{ or }(i,j)\subsetneq I\}$.
    \STATE\hspace{\algorithmicindent} (ii) For each $v\in V(G)$ such that $i\in\cI''(v)$ and $j\notin\cI''(v)$, add $\cI'(v)\leftarrow(\leftendpoint{\cI''(v)},j)$ to $\cI'$.
    \STATE\hspace{\algorithmicindent} (iii) Let $v_1,\ldots,v_p$ be all vertices of $G$ such that $i\notin\cI''(v_s)$ and $j\in\cI''(v_s)$ for each $s\in\{1,\ldots,p\}$.
    \STATE\hspace{\algorithmicindent} \phantom{(iii)} For each $s\in\{1,\ldots,p\}$, add $\cI'(v_s)\leftarrow(i+s-1,j)$ to $\cI'$.
    \STATE\hspace{\algorithmicindent} (iv) For each $v\in V(\bar{G})$, set $\cI'(v)\leftarrow(i+p+\leftendpoint{\bar{\cI}(v)}-1,i+p-2+\rightendpoint{\bar{\cI}(v)})$.
    \STATE Set $\cI''\leftarrow\cI'$.
  \ENDWHILE
  \RETURN $\cI'$
\end{algorithmic}

\end{algorithm}

\begin{lemma} \label{lem:IC_is_correct}
Let $G$ be any graph and let $t\in\{1,\ldots,\iwid(G)\}$.
Procedure~$\AlgIntervalCompletion$ for the given $G$ and $t$ returns a canonical interval representation of some interval supergraph of $G$.
\end{lemma}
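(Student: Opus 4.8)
The plan is to verify, by induction on the iterations of the main \texttt{while} loop, that the following invariant holds: at the start of each iteration the current object $\cI''$ is a canonical interval representation of some interval supergraph of $G$, and moreover $m_{i'}(\cI'')\le k/t$ for every $i'<q$. The base case is immediate since before the loop $\cI''=\cI$ is chosen by the first line of the procedure to be a canonical representation of an interval supergraph of $G$ with $\icost(\cI)=\profileG{G}$, and $q=1$ so the second part of the invariant is vacuous. The termination of the loop then follows from the invariant together with the update $q\leftarrow j+1$: each iteration strictly increases $q$ past a point where the width exceeds $k/t$, and the left endpoints of a canonical representation of an $n$-vertex graph lie in $\{1,\ldots,n\}$, so after finitely many iterations no point $i'\ge q$ with $m_{i'}(\cI'')>k/t$ remains.

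The substance of the argument is the inductive step: assuming $\cI''$ is a canonical representation of an interval supergraph of $G$, show that the $\cI'$ produced by steps (i)--(iv) is again a canonical representation of an interval supergraph of $G$. I would break this into three checks. First, \emph{well-definedness and canonical form}: the intervals placed in (i) are untouched; those in (ii) keep their (pairwise distinct, by the inductive hypothesis) left endpoints; the $p$ intervals in (iii) get left endpoints $i,i+1,\ldots,i+p-1$; and the intervals in (iv) get left endpoints $i+p+\leftendpoint{\bar\cI(v)}-1$, which, since $\bar\cI$ is canonical on $\card{V(\bar G)}$ vertices, range over $i+p,\ldots,i+p+\card{V(\bar G)}-1$. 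One checks these blocks of left endpoints are disjoint and that, together with the left endpoints outside $(i,j)$ (which are unchanged and lie outside $\{i,\ldots,j\}$), they form a set of consecutive integers of the right size; here one uses that $j$ is exactly the number of left endpoints at most $j$, i.e. the counting identity for canonical representations, to see the shift by $-1$ in (iv) lands things correctly and that no gaps or collisions arise. Second, \emph{supergraph property}: every edge $\{u,v\}\in E(G)$ is witnessed by an overlap in $\cI''$, and one must show it is still witnessed in $\cI'$. I would case on where $\cI''(u),\cI''(v)$ sit relative to $(i,j)$: if both are disjoint from $(i,j)$ or both contain $(i,j)$, they are unchanged; if one contains $i$ or $j$ its image in (ii)/(iii) only grows toward the interior, so overlaps with anything meeting $(i,j)$ are preserved; if both are contained in $(i,j)$ then $u,v\in V(\bar G)$ and $\{u,v\}\in E(\bar G)$, so $\bar\cI(u)\cap\bar\cI(v)\ne\emptyset$, and the affine map $x\mapsto$ (shift) applied in (iv) is order-preserving hence preserves the overlap; the remaining mixed cases (one interval interior, the other spanning or touching) are handled by noting the spanning/touching interval's image covers all of $(i,j)$ (for (ii)/(iii), after extension it reaches $j$ from the left or $i$ from the right) or by the maximality of $(i,j)$. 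Third, \emph{the three path-decomposition-style axioms for an interval representation} are automatic once one has a genuine collection of intervals, so the only real content is that $\cI'$ \emph{is} such a collection with the canonical normalization, which is the first check.

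The main obstacle I expect is the bookkeeping in the second check for the ``mixed'' cases, specifically an interval of type (ii) or (iii) meeting an interval contained in $(i,j)$: one must be sure that after (ii)/(iii) these extended intervals really do cover every integer point of $\{i,\ldots,j-1\}$ (equivalently that $(i,j)$ is covered), so that any interior interval from (iv) still overlaps them. This is where I would lean on the loop's selection rule --- $m_{i'}(\cI'')>k/t\ge 1$ for all $i'\in\{i,\ldots,j\}$ guarantees at least one interval covers each such point, and a short argument classifies each such covering interval as being of type (i), (ii), (iii), or (iv) --- together with the observation that the newly placed type-(iii) intervals $(i+s-1,j)$ and type-(iv) intervals (which together fill $[i+p,\,\cdot\,)$) tile the relevant range without leaving a hole, because the shift constants in (iii) and (iv) are chosen to be consecutive. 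Once that covering fact is pinned down, the edge-preservation cases collapse quickly, and the canonical-form counting is routine arithmetic with the identity $\card{\{v:\leftendpoint{\cI''(v)}\le j\}}=j$.
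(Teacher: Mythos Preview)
Your proposal is correct and follows essentially the same approach as the paper: induction on the iterations of the main loop, verifying at each step both that the new collection $\cI'$ is canonical (by accounting for the left endpoints in the four blocks (i)--(iv)) and that it represents a supergraph of $G$ (by a case analysis on how the endpoints of the relevant intervals sit relative to $(i,j)$). The paper organizes the supergraph check by vertex neighborhoods (splitting into $v\in X_1\cup X_2$, $v\in V(\bar G)$, and the rest) rather than by edges as you do, and it omits the explicit termination invariant $m_{i'}(\cI'')\le k/t$ for $i'<q$ that you include; these are cosmetic differences, and your added invariant is a reasonable strengthening that makes termination transparent.
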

\begin{proof}
Let $\cI$ be the canonical interval representation of some interval supergraph of $G$ computed at the beginning of Procedure~$\AlgIntervalCompletion$.
We proceed by induction on the number of iterations of the main loop of Procedure~$\AlgIntervalCompletion$, namely, we prove that the interval representation $\cI''$ obtained in the $s$-th iteration is a canonical interval representation of some interval supergraph of $G$.
For the purpose of the proof, we use the symbol $\cI_s$ to denote the interval representation obtained in the $s$-th iteration, taking $\cI_0=\cI$.

For the base case of $s=0$ we have that $\cI_0=\cI$ and the claim follows.
Hence, let $s>0$.
Since $\cI_s$ consists of $\card{V(G)}$ intervals, $\cI_s$ is an interval representation of an interval graph $G'$ on $\card{V(G)}$ vertices.
We need to prove that $G$ is a subgraph of $G'$ and that $\cI_s$ is canonical.

By the induction hypothesis, there exists an interval supergraph $G''$ of $G$ and $\cI_{s-1}\in\representation(G'')$.
Note that $V(G)=V(G')=V(G'')$.

To prove that $G$ is a subgraph of $G'$, we argue that $N_{G}(v)\subseteq N_{G'}(v)$ for each $v\in V(G)$.
Denote
\[X_1=\left\{x\in V(G)\st i\in\cI_{s-1}(x)\textup{ and }j\notin\cI_{s-1}(x)\right\} \textup{ and }
  X_2=\left\{x\in V(G)\st i\notin\cI_{s-1}(x)\textup{ and }j\in\cI_{s-1}(x)\right\},
\]
where $i$ and $j$ have values as in the $s$-th iteration.
Also, $\bar{G}$ refers to the subgraph computed in the $s$-th iteration.
For $v\in V(G)\setminus(X_1\cup X_2\cup V(\bar{G}))$ we have that $N_{G'}(v)=N_{G''}(v)$ and hence, since $G''$ is a supergraph of $G$, $N_G(v)\subseteq N_{G''}(v)$ gives the claim.
If $v\in X_1\cup X_2$, then
\begin{eqnarray}
N_{G''}(v) & =         & \left(N_{G''}(v)\cap(X_1\cup X_2\cup V(\bar{G}))\right)\cup\left(N_{G''}(v)\setminus(X_1\cup X_2\cup V(\bar{G}))\right) \nonumber \\
         & \subseteq & \left(X_1\cup X_2\cup V(\bar{G})\right)\cup\left(N_{G''}(v)\setminus(X_1\cup X_2\cup V(\bar{G}))\right) =N_{G'}(v). \nonumber
\end{eqnarray}
Thus, for $v\in X_1\cup X_2$ we also have $N_G(v)\subseteq N_{G''}(v)\subseteq V_{G'}(v)$ as required.
If $v\in V(\bar{G})$, then
\[N_{G}(v)\subseteq N_{\bar{G}}(v)\cup X_1\cup X_2=N_{G'}(v).\]

Now we argue that $\cI_s$ is canonical.
Since both endpoints of each interval in $\cI_s$ are clearly integers, it is sufficient to prove that for each $i'\in\{1,\ldots,\card{V(G)}\}$ there exists exactly one interval $I\in\cI_s$ whose left endpoint equals $i'$.
If $i'\in\{1,\ldots,i-1\}\cup\{j,\ldots,\card{V(G)}\}$, then the claim follows from
\[\left\{v\in V(G)\st (i',i'+1)\subseteq\cI_{s-1}(v)\right\}=\left\{v\in V(G)\st (i',i'+1)\subseteq\cI_s(v)\right\},\]
i.e., the interval representations $\cI_{s-1}$ and $\cI_s$ are identical `outside' of the interval $(i,j)$.
For each $i'\in\{i,\ldots,i+p-1\}$ the claim holds, since $\leftendpoint{\cI_s(v_{i'-i+1})}=i'$.
Finally, let $i'\in\{i+p,\ldots,j-1\}$.
Since, $\cI_{s-1}$ is canonical, $\card{V(\bar{G})}=j-i-p-1$.
This implies that $\{\leftendpoint{\cI}\st\cI\in\bar{\cI}\}=\{1,\ldots,\card{V(\bar{G})}\}$ because $\bar{\cI}$ is canonical.
Thus, there exists $v\in V(\bar{G})$ such that $\leftendpoint{\bar{\cI}(v)}=i'-i-p+1$.
By the construction of $\cI_s$, $\leftendpoint{\cI_s(v)}=i'$ as required.
\end{proof}

\begin{theorem} \label{thm:tradeoff}
Let $G$ be any graph and let $t\in\{1,\ldots,\pw{G}+1\}$ be an integer.
There exists an interval supergraph $G'$ of $G$ and $\cI'\in\representation(G')$ such that $\wid(\cI')\leq(1+\frac{2}{t})(\pw{G}+1)$ and $\icost(\cI')\leq(t+2)\profileG{G}$.
\end{theorem}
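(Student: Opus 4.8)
The plan is to analyze the output $\cI'$ of Procedure~$\AlgIntervalCompletion$ run on $G$ with the given $t$, bounding its width and its interval cost separately. By Lemma~\ref{lem:IC_is_correct}, $\cI'$ is a canonical interval representation of some interval supergraph $G'$ of $G$, so it remains only to verify the two quantitative bounds. Throughout I would write $k = \iwid(G) = \pw{G}+1$ (using Proposition~\ref{pro:interval-pathwidth}), and denote by $(i_1,j_1), (i_2,j_2), \ldots$ the successive intervals processed in the main loop; note these are pairwise disjoint and, by the choice of $i,j$, every point strictly outside all of them has multiplicity at most $k/t$ in the corresponding $\cI''$.

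\textbf{Bounding the width.} First I would argue that the intervals processed in distinct iterations are disjoint and that the refinement in iteration $s$ does not alter $\cI''$ outside $(i_s,j_s)$ — this is essentially already shown in the proof of Lemma~\ref{lem:IC_is_correct}. Hence I can bound $m_{i'}(\cI')$ separately in two cases. If $i'$ lies outside every processed interval, then $m_{i'}(\cI') = m_{i'}(\cI'') \le k/t$ at the iteration in which that region was finalized, and this is at most $(1+\tfrac2t)(\pw{G}+1)$. If $i'$ lies in some processed interval $(i_s,j_s)$, I decompose the intervals of $\cI'$ containing $i'$ into four groups according to Cases~(i)--(iv): those coming from Case~(i) that fully span $(i_s,j_s)$ contribute at most $m_{i_s-1}(\cI'') \le k/t$ (they already contained the point $i_s$); those from Cases~(ii)--(iii) are the sets $X_1, X_2$ of intervals straddling an endpoint, whose total size is again controlled because $|X_1| \le m_{i_s}(\cI'')$-type bounds hold and more carefully because intervals through $i_s$ plus intervals through $j_s$ that were present in $\cI''$ overlap near the boundary — here I would show $|X_1| + |X_2| \le 2 \cdot (k/t)$ using that in $\cI''$ each of $i_s-1$ and $j_s+1$ had multiplicity $\le k/t$; and those from Case~(iv) contribute at most $\wid(\bar\cI)$, which equals $\iwid(\bar G) \le \iwid(G) = k$ since $\bar G$ is an induced subgraph of $G$ (hence any interval supergraph of $G$ restricted to $V(\bar G)$ is an interval supergraph of $\bar G$, and pathwidth is monotone under induced subgraphs). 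Combining, $m_{i'}(\cI') \le k + 2k/t = (1+\tfrac2t)(\pw{G}+1)$, which is the claimed width bound.

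\textbf{Bounding the interval cost.} Here I would track how $\icost$ changes across one iteration. Outside $(i_s,j_s)$ nothing changes, so the cost contributed there is unaffected. Inside, the initial cost over $(i_s,j_s)$ in $\cI''$ is at least $(j_s - i_s)\cdot(k/t)$ roughly (since each point has multiplicity $> k/t$), and more precisely one can lower bound $\sum_{i' \in (i_s,j_s)} m_{i'}(\cI'')$ from below, which gives a handle because this sum feeds into $\icost(\cI'')$. After refinement, the cost inside consists of: the Case~(i) long intervals, whose $f$-contribution is bounded using that they each cross $i_s$ (so their ``new length'' only grows by a controlled amount); the Case~(ii)/(iii) intervals, each of which now spans essentially all of $(i_s,j_s)$, so each contributes at most $\approx (j_s - i_s)$; and the Case~(iv) intervals, whose total contribution is $\icost(\bar\cI) = \icost(\bar G) \le \icost(G''_{s-1}[V(\bar G)])$, the cost of the old representation restricted to $V(\bar G)$, which is part of the old cost inside $(i_s,j_s)$. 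The crux is a \emph{local} inequality of the form: the new cost contributed in $(i_s,j_s)$ is at most $(t+2)$ times the old cost contributed in $(i_s,j_s)$ (or, summed appropriately, the increments telescope). Intuitively: the number of straddling/spanning intervals is $O(k/t)$, each now of length $O(j_s-i_s)$, so the extra cost is $O((k/t)(j_s-i_s))$, while the old cost inside was already $\ge (k/t)(j_s-i_s)$ up to constants — so the blow-up factor is bounded by a constant, and pinning that constant to exactly $t+2$ is the delicate part. Summing the local bounds over all (disjoint) processed intervals and adding the unchanged cost, I get $\icost(\cI') \le (t+2)\cdot \icost(\cI) = (t+2)\profileG{G}$ by Proposition~\ref{pro:interval-profile}.

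\textbf{Main obstacle.} The width bound is comparatively routine bookkeeping over the four cases. The real difficulty is the interval-cost bound: one must argue that across a single refinement the cost grows by at most a factor related to $t$, and the natural estimates have several competing terms (long intervals lengthening slightly, boundary intervals lengthening a lot but being few, interior intervals re-laid-out optimally). The key insight I expect to need is that the width condition $m_{i'}(\cI'') > k/t$ on \emph{every} point of $(i_s,j_s)$ forces the old cost restricted to that window to be large — at least $\frac{k}{t}(j_s-i_s)$ minus boundary corrections — which is exactly what is needed to absorb the $\le p \cdot (j_s - i_s)$-type new cost from the $p = |X_1|+|X_2| \le 2k/t$ boundary-straddling intervals (yielding the $2$ in $t+2$) together with the at-most-$t$ extra copies of the window-width needed to pay for the interior re-layout (the $t$ in $t+2$). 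Making this charging argument precise, with the constant coming out as exactly $t+2$ rather than something larger, is where the care must go.
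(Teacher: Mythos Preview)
Your width argument is essentially the paper's, though your four-way split risks overcounting: if you bound the Case~(i) spanning intervals by $k/t$ \emph{and} bound $|X_1|+|X_2|$ by $2k/t$ separately, you get $k+3k/t$, not $k+2k/t$. The paper instead observes directly that every non-Case-(iv) interval through $p$ must (in $\cI$) pass through $i_q-1$ or through $j_q+1$, so their total number is at most $m_{i_q-1}(\cI)+m_{j_q+1}(\cI)\le 2k/t$; adding $\wid(\bar\cI_q)\le\iwid(\bar G_q)\le\iwid(G)=k$ gives the bound $(1+\tfrac2t)k$ cleanly.

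For the cost bound you are working much harder than necessary, and this is the real gap in your plan. You propose a window-level charging argument comparing new cost in $(i_s,j_s)$ against old cost there, and you correctly identify the lower bound $m_{i'}(\cI)>k/t$ on that window as the key resource --- but you then try to match it against several separate upper-bound terms (long intervals, boundary intervals, re-laid-out interior) and worry about whether the constant comes out as $t+2$. The paper avoids all of this with a one-line pointwise argument: for $p$ inside a processed interval, the width bound already gives $m_p(\cI')\le(1+\tfrac2t)k$, while the defining condition of the window gives $m_p(\cI)\ge k/t$; dividing, $m_p(\cI')\le(t+2)\,m_p(\cI)$. For $p$ outside every processed interval, $m_p(\cI')=m_p(\cI)$. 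Since both $\cI$ and $\cI'$ are canonical, $\icost(\cdot)=\sum_{p=1}^{n}m_p(\cdot)$, so summing the pointwise inequality yields $\icost(\cI')\le(t+2)\,\icost(\cI)=(t+2)\,\profileG{G}$ immediately. No separate accounting of Cases~(i)--(iv) for cost is needed; the width bound does all the work.
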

\begin{proof}
Suppose that Procedure~$\AlgIntervalCompletion$ is executed for the input $G$ and $t$.
Let $\cI$ be the canonical interval representation of some interval supergraph of $G$ computed at the beginning of Procedure~$\AlgIntervalCompletion$.
Moreover, take such an $\cI$ that satisfies $\icost(\cI)=\icost(G)$.
Let $r$ be the number of iterations performed by the main loop.
Let $\bar{G}_q$ and $\bar{\cI}_q$, $q\in\{1,\ldots,r\}$, be the graph $\bar{G}$ and its interval representation, respectively, computed in the $q$-th iteration of the main loop.
Also, let $(i_q,j_q)$ be the interval used to select $\bar{G}_q$, i.e., $\bar{G}_q$ is the subgraph of $G$ induced by all vertices $v$ such that $\cI''(v)\subseteq(i_q,j_q)$ for each $q\in\{1,\ldots,r\}$.

Note that Procedure~$\AlgIntervalCompletion$ does not specify how $\bar{\cI}_q$ is selected for each $q\in\{1,\ldots,r\}$ and therefore for the purpose of this proof of upper bounds we may assume that the interval representation $\bar{\cI}_q$ satisfies
\begin{equation} \label{eq:wid_bar_cI}
\wid(\bar{\cI}_q)=\iwid(\bar{G}_q).
\end{equation}
By Lemma~\ref{lem:IC_is_correct}, $\cI'$ returned by Procedure~$\AlgIntervalCompletion$ is a canonical representation of some interval supergraph of $G$.
By construction,
\begin{equation} \label{eq:in_gaps}
m_p(\cI')=m_p(\cI) \textup{ for each }p\notin\bigcup_{q=1}^r\{i_q,\ldots,j_q\}.
\end{equation}
Since $\bar{G}_q$ is a subgraph of $G$, $\iwid(\bar{G}_q)\leq\iwid(G)$ for each $q\in\{1,\ldots,r\}$ and hence by~\eqref{eq:wid_bar_cI} we obtain $\wid(\bar{\cI}_q)\leq\iwid(G)$.
By the choice of $i_q$ and $j_q$, we have $m_{i_q-1}(\cI)\leq \iwid(G)/t$ and $m_{j_q+1}(\cI)\leq \iwid(G)/t$.
Hence, we obtain
\begin{equation} \label{eq:in_intervals1}
m_p(\cI')\leq m_{i_q-1}(\cI)+m_{j_q+1}(\cI)+\wid(\bar{\cI}_q) \leq \left(1+\frac{2}{t}\right)\iwid(G)
\end{equation}
for each $p\in\{i_q,\ldots,j_q\}$ and for each $q\in\{1,\ldots,r\}$.
Equations~\eqref{eq:in_gaps} and~\eqref{eq:in_intervals1} give that $\wid(\cI')\leq(1+\frac{2}{t})\iwid(G)$.
Observe that, by~\eqref{eq:in_intervals1}, for each $q\in\{1,\ldots,r\}$ and for each $p\in\{i_q,\ldots,j_q\}$ it holds
\begin{equation} \label{eq:in_intervals2}
m_p(\cI')\leq (t+2)m_p(\cI)
\end{equation}
because $m_p(\cI)\geq\iwid(G)/t$.
By~\eqref{eq:in_gaps} and~\eqref{eq:in_intervals2}, $\icost(\cI')\leq (t+2)\icost(\cI)$.
Finally note that by Proposition~\ref{pro:interval-profile}, $\icost(\cI)=\profileG{G}$ and by Proposition~\ref{pro:interval-pathwidth}, $\iwid(G)=\pw{G}+1$, which completes the proof.
\end{proof}

\section{Pathwidth and profile are `orthogonal'} \label{sec:orthogonal}
In this section we prove that the two optimization criteria studied in this work cannot be minimized simultaneously for some graphs.
In other words, we prove by example, that there exist graphs $G$ such that any interval supergraph $G'$ of $G$ that has the minimum number of edges (i.e., $\card{E(G')}=\profileG{G}$) cannot have minimum width (i.e., $\iwid(G)>\pw{G}-1$) and vice versa.
The example that we construct will be also used in the next section and for this reason we present it here in terms of chordal graphs, which is a class of graphs that generalizes interval graphs.
For that we need some additional definitions.

\medskip
We say that $C$ is an \emph{induced cycle of length $k\geq 3$} in a graph $G$ if $C$ is a subgraph of $G$ and $\{\{u,v\}\in E(G)\st u,v\in V(C)\}=V(C)$, i.e., the only edges in $G$ between vertices in $V(C)$ are the ones in $E(C)$.
A graph is \emph{chordal} if there is no induced cycle of length greater than $3$ in $G$.
Any edge that does not belong to a cycle $C$ and connects two vertices of $C$ is called a \emph{chord} of $C$.

In this section we consider a graph $G$ with vertex set $V(G)=A\cup B\cup B'\cup C$ and edges placed in such a way that $A\cup B$, $A\cup B'$, $B\cup C$ and $B'\cup C$ form cliques.
In our construction we take any sets that satisfy
\begin{equation} \label{eq:ABC}
\card{A}<\card{B}=\card{B'}<\card{C} \quad\textup{and}\quad \card{A}\card{C}>\card{B}^2.
\end{equation}

We have the following observation.
\begin{lemma} \label{lem:AllOrNothing}
If $G'$ is a minimal chordal supergraph of $G$, then each of the subgraphs $G'[A\cup C]$ or $G'[B\cup B']$ is either a clique or an union of two disconnected cliques.
\end{lemma}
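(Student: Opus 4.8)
The plan is to pass to the complement. Since $E(G)$ is exactly the union of the edge sets of the cliques $A\cup B$, $A\cup B'$, $B\cup C$, $B'\cup C$, the non-adjacent pairs of $G$ are precisely the pairs with one endpoint in $A$ and one in $C$, together with the pairs with one endpoint in $B$ and one in $B'$. Hence $\overline G$ is the disjoint union $K_{A,C}\sqcup K_{B,B'}$ of the complete bipartite graph between $A$ and $C$ and the one between $B$ and $B'$; in particular $\overline G$ is bipartite, so $G$ — and every supergraph of $G$ on the same vertex set, in particular a minimal chordal supergraph $G'$ — is co-bipartite. The first ingredient I would record is the elementary fact that a co-bipartite graph is chordal if and only if its complement contains no induced $2K_2$: a co-bipartite graph cannot contain an induced cycle of length $\ge 5$ (because $\overline{C_5}=C_5$ is not bipartite and $\overline{C_k}$ contains a triangle for every $k\ge 6$), so chordality is equivalent to having no induced $C_4$, which in the complement is exactly having no induced $2K_2$.

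Next I would translate the hypotheses on $G'$. Write $\overline{G'}=\mathcal{H}_1\sqcup\mathcal{H}_2$, where $\mathcal{H}_1$ is the bipartite graph on $(A,C)$ and $\mathcal{H}_2$ the bipartite graph on $(B,B')$ obtained by restricting $\overline{G'}$; this is indeed a disjoint decomposition, since any edge of $\overline{G'}$ joining $A\cup C$ to $B\cup B'$ would be a non-edge of $G$, which does not exist. Chordality of $G'$ says $\overline{G'}$ has no induced $2K_2$. Minimality of $G'$ says that no edge $e$ of $\overline G$ can be added to $\overline{G'}$ without creating an induced $2K_2$: otherwise $G'-e$ would still be a chordal supergraph of $G$ (its complement $\overline{G'}+e$ is bipartite and $2K_2$-free), contradicting minimality. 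From $2K_2$-freeness alone, at least one of $\mathcal{H}_1,\mathcal{H}_2$ must be edgeless, since an edge of $\mathcal{H}_1$ together with an edge of $\mathcal{H}_2$ lies in two different components of $\overline{G'}$ and hence induces a $2K_2$. Say $\mathcal{H}_2$ is edgeless; then the maximality statement above forces $\mathcal{H}_1$ to be a maximal $2K_2$-free bipartite graph on the bipartition $(A,C)$, because adding any $A$-to-$C$ non-edge to $\mathcal{H}_1$ without creating an induced $2K_2$ would not create one in $\overline{G'}=\mathcal{H}_1\sqcup\mathcal{H}_2$ either.

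The heart of the argument is then the following purely bipartite statement, which I expect to be the step requiring the most care: a maximal $2K_2$-free bipartite graph with both sides non-empty is either edgeless or complete bipartite. I would prove it via the standard chain-graph structure of $2K_2$-free bipartite graphs: one side can be listed as $x_1,\dots,x_m$ with $N(x_1)\supseteq\cdots\supseteq N(x_m)$, and the other side as $y_1,\dots,y_n$ in order of non-increasing degree, so that $N(x_i)=\{y_1,\dots,y_{|N(x_i)|}\}$ for every $i$. If the graph is non-empty but not complete then $|N(x_m)|<n$, and adding the edge $\{x_m,y_{|N(x_m)|+1}\}$ keeps every set $N(x_i)$ an initial segment of $y_1,\dots,y_n$, hence keeps all neighbourhoods nested and the graph $2K_2$-free, contradicting maximality.

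Finally I would assemble the pieces. Because $K_{A,C}$ has at least one edge (all four sets are non-empty by \eqref{eq:ABC}), an edgeless $\mathcal{H}_1$ cannot be maximal, so $\mathcal{H}_1=K_{A,C}$. Taking complements inside $A\cup C$ then gives $G'[A\cup C]=K_A\sqcup K_C$, a disjoint union of two cliques, while $\mathcal{H}_2$ being edgeless gives $G'[B\cup B']=K_{B\cup B'}$, a clique; the case in which $\mathcal{H}_1$ rather than $\mathcal{H}_2$ is edgeless is symmetric, with the roles of $A\cup C$ and $B\cup B'$ interchanged. In either case each of $G'[A\cup C]$ and $G'[B\cup B']$ is a clique or a disjoint union of two cliques, which is the assertion of the lemma.
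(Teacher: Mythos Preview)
Your argument is correct, but it takes a markedly different route from the paper's. The paper proceeds by a direct four-cycle argument: assuming $G'[A\cup C]$ is not a clique, pick non-adjacent $a\in A$, $c\in C$; then for arbitrary $b\in B$, $b'\in B'$ the vertices $a,b,c,b'$ form an induced $4$-cycle unless $\{b,b'\}\in E(G')$, so chordality forces $G'[B\cup B']$ to be a clique; finally, since the supergraph of $G$ obtained by adding only the $B$--$B'$ edges is already chordal, minimality yields that $G'$ has no $A$--$C$ edges at all. That is essentially three lines.

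Your approach --- pass to the complement, observe that $\overline{G}=K_{A,C}\sqcup K_{B,B'}$, reduce chordality of a co-bipartite graph to $2K_2$-freeness of its bipartite complement, and then invoke the chain-graph structure to show that a maximal $2K_2$-free bipartite graph on a fixed bipartition must be complete bipartite --- is sound and yields the same conclusion. What it buys is a clean structural picture: it makes explicit that the only minimal chordal supergraphs of $G$ are $G_{(A,C)}$ and $G_{(B,B')}$, and it would generalise immediately to any graph whose complement is a disjoint union of complete bipartite pieces. The cost is considerably more machinery (the $C_k$/$\overline{C_k}$ analysis, the chain-graph ordering, the maximality step) for a lemma that in this concrete setting follows from inspecting a single $4$-cycle. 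Both proofs are valid; the paper's is simply the shorter of the two.
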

\begin{proof}
We prove that the subgraph of $G'$ induced by $A\cup C'$ is either a clique or is disconnected and the proof for $B\cup B'$ is identical due to the symmetry.
If $A\cup C'$ induces a clique, then the claim follows so suppose that there exist two vertices $a\in A\cup C$ and $c\in A\cup C$ that are not adjacent in $G'$.
Without loss of generality let $a\in A$ and $c\in C$ --- this is due to the fact that $G'[A]$ and $G'[C]$ are cliques.
Take any two vertices $b\in B$ and $b'\in B'$.
Since $G'$ is chordal, the cycle induced by $a,b,b'$ and $c$ has a chord in $G'$.
Thus, there is an edge between $b$ and $b'$ in $G'$.
Since $b$ and $b'$ are selected arbitrarily, $G'[B\cup B']$ is a clique.
Note that a supergraph of $G$ that has no edge between any vertex in $A$ and any vertex in $C$ and in which $B\cup B'$ induces a clique is chordal.
Thus, by the minimality of $G'$, $G'[A\cup C]$ consists of two cliques $G'[A]$ and $G'[C]$ with no edges between them, as required.
\end{proof}

\begin{theorem} \label{thm:orthogonal-pw}
There exists a graph $G$ such that no interval supergraph $G'$ of $G$ satisfies $\iwid(G')=\iwid(G)$ and $\icost(G')=\icost(G)$.
\end{theorem}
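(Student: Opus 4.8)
The plan is to instantiate the graph $G$ of Lemma~\ref{lem:AllOrNothing} with concrete set sizes satisfying~\eqref{eq:ABC} and then compute, separately, the minimum-width interval supergraph and the minimum-cost interval supergraph, showing they force incompatible structure on $G'[A\cup C]$ and $G'[B\cup B']$. First I would observe that since $G$ itself is chordal (indeed, it is built from four cliques glued along the sets $A,B,B',C$ in a $4$-cycle pattern), every interval supergraph $G'$ is in particular a chordal supergraph, so if $G'$ is edge-minimal or width-minimal among interval supergraphs then (modulo the minimality caveat) it is close to a minimal chordal supergraph and Lemma~\ref{lem:AllOrNothing} applies: in any minimal chordal supergraph, one of $G'[A\cup C]$, $G'[B\cup B']$ is a clique and the other is a disjoint union of two cliques. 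So there are essentially two `shapes' of candidate optimal solution, and the heart of the argument is to show that one shape wins on width while the other wins on cost.

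Next I would evaluate both criteria on the two shapes. If $G'[B\cup B']$ is completed to a clique (and $A,C$ stay separated), the natural interval representation lays out the vertices in the order $A$, then $B\cup B'$, then $C$ (with $A$ overlapping $B\cup B'$ and $B\cup B'$ overlapping $C$), giving clique number about $\card{A}+\card{B}+\card{B'}$ at the left overlap and $\card{B}+\card{B'}+\card{C}$ at the right; whereas if instead $G'[A\cup C]$ is completed to a clique and $B,B'$ stay separated, one gets an interval graph whose maximum clique is about $\card{A}+\card{B}+\card{C}$ (the clique $A\cup B\cup C$, say, coming from $A\cup B$ and $B\cup C$ plus the added $A$–$C$ edges). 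Under~\eqref{eq:ABC}, since $\card{A}<\card{B}=\card{B'}<\card{C}$, the quantity $\card{A}+\card{B}+\card{C}$ is strictly smaller than $\card{B}+\card{B'}+\card{C}=2\card{B}+\card{C}$, so the `$A\cup C$ is a clique' shape has strictly smaller interval width. For the edge count, the added edges in the `$B\cup B'$ clique' shape number $\card{B}\cdot\card{B'}=\card{B}^2$, while in the `$A\cup C$ clique' shape they number $\card{A}\cdot\card{C}$, and~\eqref{eq:ABC} gives exactly $\card{A}\card{C}>\card{B}^2$, so the `$B\cup B'$ clique' shape has strictly fewer edges. (One should double-check that within each shape the chosen representation is in fact optimal for that criterion, and that adding the $A$–$C$ clique edges on top of the `$B\cup B'$ clique' representation, or vice versa, only makes things worse — this is where one uses that mixing both completions is never forced and never helpful.) Hence the unique (up to the two shapes) width-optimal $G'$ and the unique cost-optimal $G'$ differ, so no single interval supergraph simultaneously attains $\iwid(G')=\iwid(G)$ and $\icost(G')=\icost(G)$.

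I expect the main obstacle to be the gap between `minimal chordal supergraph' (to which Lemma~\ref{lem:AllOrNothing} directly applies) and `interval supergraph realizing one of the two optima': an edge-minimum interval supergraph need not be a minimal chordal supergraph in the inclusion sense, and a width-minimum one certainly need not be edge-minimal, so one must argue that an optimal interval supergraph still respects the dichotomy — e.g. by noting that any interval supergraph contains some minimal chordal (in fact interval) supergraph as a subgraph, so its edge count is at least $\icost$ of that minimal one and its width is at least $\iwid$ of it, and then ruling out, via the sizes in~\eqref{eq:ABC}, that the `clique on $A\cup C$' completion can have few edges or that the `clique on $B\cup B'$' completion can have small width. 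The bookkeeping of which added edges and which clique sizes occur in each case, and verifying the two strict inequalities that~\eqref{eq:ABC} is tailored to supply, is routine once the dichotomy is pinned down; the only genuinely delicate point is handling interval (as opposed to merely chordal) supergraphs and confirming that the two shapes really are the only relevant ones for each of the two optimization criteria.
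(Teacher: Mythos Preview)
Your proposal is correct and follows essentially the same route as the paper: reduce via Lemma~\ref{lem:AllOrNothing} to the two candidate minimal chordal supergraphs $G_{(A,C)}$ and $G_{(B,B')}$, note both are interval, and then compare $\iwid$ and $\icost$ on them using the two strict inequalities encoded in~\eqref{eq:ABC}. One small slip worth fixing: $G$ is \emph{not} chordal (for $a\in A$, $b\in B$, $c\in C$, $b'\in B'$ the $4$-cycle $a\,b\,c\,b'$ is induced), but you never actually rely on that claim, and your handling of the minimal-chordal-vs-interval gap via passing to a minimal interval supergraph contained in any optimum is exactly what the paper does in its final sentence.
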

\begin{proof}
Consider the graph $G=(A\cup B\cup B'\cup C,E(G))$ constructed at the beginning of this section.
For any minimal chordal supergraph $G'$ of $G$, we say that it is $(A,C)$-connected ($(B,B')$-connected) if $G'[A\cup C]$ is a clique ($G'[B\cup B']$ is a clique, respectively).
Denote by $G_{(A,C)}$ (respectively, $G_{(B,B')}$) the minimal chordal supergraph of $G$ that is $(A,C)$-connected ($(B,B')$-connected, respectively) but has no edge joining a vertex in $B$ (respectively $A$) with a vertex in $B'$ (respectively $C$).

Each interval graph is also chordal.
On the other hand, both $G_{(A,C)}$ and $G_{(B,B')}$ are interval graphs.

Consider a minimal chordal supergraph $G'$ of $G$.
By Lemma~\ref{lem:AllOrNothing}, $G'$ is $(A,C)$-connected, $(B,B')$-connected, both $(A,C)$- and $(B,B')$-connected or neither $(A,C)$- and $(B,B')$-connected.
Since it is minimal, it is not $(A,C)$- and $(B,B')$-connected simultaneously.
Also, it must be $(A,C)$- or $(B,B')$-connected for otherwise it is not chordal.
This implies that $G'$ equals either to $G_{(A,C)}$ or $G_{(B,B')}$.
We have that $G'$ is an interval graph and hence, by \eqref{eq:ABC}, we obtain
\[\iwid(G_{(A,C)}) - \iwid(G_{(B,B')}) = \left(\card{A}+\card{B}+\card{C}\right) - \left(\card{B}+\card{B'}+\card{C}\right) < 0,\]
and
\[\icost(G_{(A,C)}) - \icost(G_{(B,B')}) = \card{A}\card{C} - \card{B}\card{B'}>0.\]
We conclude by noting that it is enough to consider minimal interval supergraphs when minimizing interval cost or interval width.
\end{proof}

\section{Treewidth and fill-in} \label{sec:treewidth-fill-in}

We refer the reader e.g. to \cite{Proskurowski87,GareyJohnson79,RobertsonS86,Yannakakis81} for definitions of the NP-complete problems of treewidth and fill-in.
The treewidth for a given graph $G$, denoted by $\treewidth{G}$, can be defined as the the minimum $k$ such that there exists a chordal supergraph $G'$ of $G$ such that the maximum clique $\cliqueSize{G'}$ of $G'$ has size at most $k+1$.
The \emph{fill-in} of $G$ is the minimum $m$ such that there exists a chordal supergraph of $G$ that can be constructed by adding $m$ edges to $G$.
Hence, our corresponding combinatorial problem can be stated as follows:

\medskip
\noindent
\textbf{Problem $\problemTF$} (\emph{Treewidth \& Fill-in Minimization}):
\begin{description}[labelindent=\parindent,noitemsep,topsep=0pt]
   \item[{\normalfont Input:}] a graph $G$, integers $k$ and $c$.
   \item[{\normalfont Question:}] does there exist a chordal supergraph $G'$ of $G$ such that $\cliqueSize{G'}\leq k$ and $\card{E(G')}\leq c$?
\end{description}
\medskip

By the same proof as in Theorem~\ref{thm:orthogonal-pw}, we obtain that for some graphs there is no solution to Problem~$\problemTF$ in which $k=\treewidth{G}-1$ and $c=\card{E(G)}+\fillin{G}$.
\begin{corollary} \label{cor:orthogonal-tw}
There exists a graph $G$ such that no chordal supergraph $G'$ of $G$ satisfies $\treewidth{G}=\cliqueSize{G'}-1$ and $\fillin{G}+\card{E(G)}=\card{E(G')}$.
\qed
\end{corollary}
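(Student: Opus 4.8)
The plan is to reuse the graph $G=(A\cup B\cup B'\cup C,E(G))$ and the structural analysis from the proof of Theorem~\ref{thm:orthogonal-pw} verbatim, observing that every step there was already phrased in terms of \emph{chordal} (not merely interval) supergraphs. First I would invoke Lemma~\ref{lem:AllOrNothing} to conclude that any minimal chordal supergraph $G'$ of $G$ is $(A,C)$-connected or $(B,B')$-connected but not both (minimality rules out both; chordality rules out neither), so $G'\in\{G_{(A,C)},G_{(B,B')}\}$. The key point is that this reasoning never used that $G'$ is an interval graph, so it applies directly to Problem~$\problemTF$.

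Next I would compute the two relevant quantities for these two candidates. Since $A\cup B\cup C$ and $B\cup B'\cup C$ are the (vertex sets of the) maximum cliques of $G_{(A,C)}$ and $G_{(B,B')}$ respectively, we get $\cliqueSize{G_{(A,C)}}=\card{A}+\card{B}+\card{C}$ and $\cliqueSize{G_{(B,B')}}=\card{B}+\card{B'}+\card{C}$, so by \eqref{eq:ABC} the first is strictly smaller than the second; hence only $G_{(A,C)}$ realizes $\treewidth{G}$. For the number of edges, $G_{(A,C)}$ adds all $\card{A}\card{C}$ edges between $A$ and $C$ while $G_{(B,B')}$ adds all $\card{B}\card{B'}$ edges between $B$ and $B'$, and again \eqref{eq:ABC} gives $\card{A}\card{C}>\card{B}\card{B'}$, so only $G_{(B,B')}$ realizes $\fillin{G}$. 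Since an optimal chordal supergraph for either criterion can be taken to be minimal, no single chordal supergraph can be optimal for both, which is exactly the statement $\treewidth{G}=\cliqueSize{G'}-1$ and $\fillin{G}+\card{E(G)}=\card{E(G')}$ cannot hold simultaneously.

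I do not anticipate a genuine obstacle here, since the corollary is essentially a transcription of Theorem~\ref{thm:orthogonal-pw}; the one place requiring a sentence of care is checking that restricting attention to \emph{minimal} chordal supergraphs loses nothing — i.e., that some treewidth-optimal chordal supergraph is minimal and likewise for fill-in — which follows because removing a fill edge that keeps the graph chordal can only decrease both the clique number and the edge count. The only other thing worth stating explicitly is that $G_{(A,C)}$ and $G_{(B,B')}$ are indeed chordal (a complete split-like graph with the stated clique structure has an easy perfect elimination ordering, or one simply notes, as in the proof of Lemma~\ref{lem:AllOrNothing}, that a supergraph of $G$ with no $A$--$C$ edges in which $B\cup B'$ is a clique is chordal, and symmetrically). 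Given all this, the corollary follows with essentially no new computation, which is why the paper states it as a corollary with a $\qed$.
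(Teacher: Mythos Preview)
Your proposal is correct and follows exactly the approach the paper intends: the paper states the corollary with a \qed\ immediately after writing ``By the same proof as in Theorem~\ref{thm:orthogonal-pw}'', and your write-up is precisely that proof transcribed to the chordal/treewidth/fill-in setting, with the clique-size and added-edge computations matching the $\iwid$ and $\icost$ computations there. Your additional remarks on why one may restrict to minimal chordal supergraphs and why $G_{(A,C)}$ and $G_{(B,B')}$ are chordal are the natural small checks and are consistent with the paper's reasoning.
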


\section{Applications to graph searching} \label{sec:applications}

\subsection{A Short Introduction to Graph Searching} \label{subsec:graphsearching}
The problem of graph searching can be informally stated as one in which an agent called the \emph{fugitive} is moving around the graph with the goal to escape a group of agents called guards or \emph{searchers}.
There are various statements of this problem specifying behavior of fugitive and searchers, phase restrictions, speeds of both parties or their other capabilities like visibility, radius of capture etc.
Numerous optimization criteria have been studied for these games.
However, the tradeoffs between different optimization parameters have not yet been throughly analyzed.
In this work we refer to one of the two classical formulations of the graph searching problem, namely the \emph{node search}; see a formal definition below.

In the original statement of the problem the fugitive is considered invisible (i.e., the searchers can deduce its potential locations only based on the history of their moves) and \emph{active}, i.e., constantly moving with unbounded speed to counter the searchers' strategy.
It turns out that the minimum number of searchers sufficient to guarantee the capture of the fugitive corresponds to the pathwidth of the underlying graph~\cite{DendrisKT97}.
Later on, the lazy, also referred to as \emph{inert}, fugitive variant has been defined in which the fugitive only moves when the searchers are one move apart from catching it.
The latter version was first introduced in~\cite{DendrisKT97} where the authors show that minimizing the number of searchers precisely corresponds to finding the treewidth of the input graph.
Seymour and Thomas proposed in \cite{SeymourThomas93} a variant of the game in which the fugitive was visible and active. In the same paper they prove that the visible active variant of the problem is equivalent to the invisible inert variant.

All previously mentioned problems considered the number of searchers used by a strategy to be the optimization criterion.
In~\cite{interval_completion}, the authors analyzed the cost defined (informally) as the sum of the guard counts over all steps of the strategy.
This graph searching parameter is the one that corresponds to the profile minimization.

Not much is known in terms of two-criteria optimization in the graph searching games.
To mention some, examples, there is an analysis of simultaneous minimization of time (number of `parallel' steps) and the number of searchers for the visible variant~\cite{DereniowskiKZ15} and for the inert one~\cite{LiMNS15} of the node search.
Also, some results on tradeoffs between the cost and the number of searchers for the edge search game can be found in~\cite{DereniowskiD13}.

\subsection{Formal definitions} \label{subsec:search-defs}

The following definitions of the \emph{node search problem} are taken from or based on \cite{interval_completion} and \cite{FominHT04}.
An \emph{active search strategy} $\cS$ for a graph $G$ is a sequence of pairs
\[(A_0,Z_0),(A_1,Z_1),\ldots,(A_m,Z_m)\]
that satisfies the following \emph{axioms}:
\begin{enumerate}[label={\normalfont(\roman*)},leftmargin=*]
 \item\label{axiom:1} $A_i\subseteq V(G)$ and $Z_i\subseteq V(G)$ for each $i\in\{0,\ldots,m\}$,
 \item\label{axiom:2} $A_0=Z_0=\emptyset$, $A_m=V(G)$ and $Z_m=\emptyset$,
 \item\label{axiom:3} (\emph{placing/removing searchers}) For each $i\in\{1,\ldots,m\}$ there exist $v_i\in V(G)$ such that $\{v\}=A_i\setminus A_{i-1}$, $v_i\in Z_i$ and $Z_i\subseteq A_{i-1}\cup\{v_i\}$.
 \item\label{axiom:4} (\emph{possible recontamination}) For each $i\in\{1,\ldots,m\}$, $A_i$ consists of $v_i$ and each vertex $u$ such that each path connecting $u$ to a vertex in $V(G)\setminus A_{i-1}$ has an internal vertex in $Z_{i-1}$.
\end{enumerate}
An \emph{inert search strategy} $\cS$ is one that satisfies axioms \ref{axiom:1},\ref{axiom:2},\ref{axiom:3} and:
\begin{enumerate}[label={\normalfont(\roman*')},leftmargin=*]
\setcounter{enumi}{3}
 \item\label{axiom:4'} (\emph{possible recontamination}) For each $i\in\{1,\ldots,m\}$, $A_i$ consists of $v_i$ and each vertex $u$ such that each path connecting $u$ to $v_i$ has an internal vertex in $Z_{i-1}$.
\end{enumerate}
We say that, in the $i$-th step of $\cS$, the vertices in $Z_i$ are \emph{guarded}, the vertices in $A_i$ are \emph{cleared} and the vertices in $V(G)\setminus A_i$ are \emph{contaminated}.
The \emph{search cost} of a search strategy $\cS$ is defined as
\[\costStrategy{\cS}=\sum_{i=0}^m \card{Z_i}\]
and the number of searchers it uses is
\[\nsn(\cS)=\max\left\{\card{Z_i}\st i\in\{0,\ldots,m\}\right\}.\]
Informally speaking, in an active search strategy recontamination can `spread' from any contaminated vertex while in inert strategies recontamination can only spread from $v_i$.
% Then, the \emph{active search cost} of a graph $G$ is
% \[\costActive{G}=\min\left\{\costStrategy{\cS}\st\cS\emph{ is an active search strategy for }G\right\},\]
% the \emph{inert search cost} of a graph $G$ is
% \[\costInert{G}=\min\left\{\costStrategy{\cS}\st\cS\emph{ is an inert search strategy for }G\right\}.\]

We say that a strategy (active or inert) $\cS=((A_0,Z_0),\ldots,(A_m,Z_m))$ is \emph{monotone} if $A_i\subseteq A_{i+1}$ for each $i\in\{1,\ldots,m-1\}$.
% We similarly define the \emph{monotone active search cost} of $G$ as
% \[\costActiveMonotone{G}=\min\left\{\costStrategy{\cS}\st\cS\emph{ is an active monotone search strategy of }G\right\}\]
% and the \emph{monotone inert search cost} of $G$ as
% \[\costInertMonotone{G}=\min\left\{\costStrategy{\cS}\st\cS\emph{ is an inert monotone search strategy of }G\right\}.\]

\subsection{Consequences of our results} \label{sec:consequences}

We have the following equivalences:
\begin{theorem}[\cite{interval_completion}]
For each graph $G$, if $\cS$ an active monotone search strategy of minimum cost, then $\costStrategy{\cS}=\icost(G)$.
\qed
\end{theorem}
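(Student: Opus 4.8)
The plan is to set up a cost-preserving correspondence between active monotone search strategies for $G$ and canonical interval representations of interval supergraphs of $G$, and then to conclude using Proposition~\ref{pro:interval-profile}. The guiding intuition is that an active monotone strategy is the choice of a linear order in which to clear the vertices together with, for each vertex, how long a searcher is kept on it; and that a canonical interval representation stores exactly this information --- the (pairwise distinct) left endpoints encoding the clearing order, and the extent of each interval encoding the lifespan of the corresponding searcher. Accordingly I would prove the inequalities $\icost(G)\le\costStrategy{\cS}$ for every active monotone $\cS$, and $\costStrategy{\cS}\le\icost(G)$ for some active monotone $\cS$, separately.

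For the first inequality, take an active monotone $\cS=((A_0,Z_0),\ldots,(A_m,Z_m))$. Monotonicity turns $A_0\subseteq A_1\subseteq\cdots\subseteq A_m=V(G)$ into a chain, hence into an enumeration $v_1,\ldots,v_n$ of $V(G)$, where $v_i$ is the vertex entering the cleared set in the step that places it. To $v_i$ I assign the interval with left endpoint $i$ and right endpoint governed by the last step in which $v_i$ is guarded, obtaining a family $\cI_{\cS}$. Using the recontamination axiom~\ref{axiom:4}, one checks that if $\{u,v\}\in E(G)$ with $u$ cleared before $v$, then the searcher on $u$ is still present in the step that clears $v$ --- otherwise the contaminated neighbour $v$ of the already-cleared vertex $u$ would recontaminate $u$, contradicting monotonicity --- so $\cI_{\cS}(u)\cap\cI_{\cS}(v)\neq\emptyset$, and $\cI_{\cS}$ is a canonical interval representation of an interval supergraph of $G$. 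Counting, for each integer point, how many intervals of $\cI_{\cS}$ contain it and summing over all points produces a quantity no larger than $\sum_{i}\card{Z_i}$, so $\icost(\cI_{\cS})\le\costStrategy{\cS}$ and hence $\icost(G)\le\costStrategy{\cS}$.

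For the reverse inequality I would invert the construction: take a canonical $\cI\in\representation(G)$ with $\icost(\cI)=\icost(G)$, enumerate $v_1,\ldots,v_n$ by increasing $\leftendpoint{\cI(v_i)}$, clear $v_i$ in step $i$, and keep a searcher on an already-cleared vertex $v_j$ from step $j$ until the last step $k$ with $\cI(v_j)\cap\cI(v_k)\neq\emptyset$. Axioms~\ref{axiom:1}--\ref{axiom:3} hold directly; the substantive point is that the resulting strategy is monotone, i.e.\ satisfies axiom~\ref{axiom:4} without any recontamination. Here the interval structure is what makes things work: the guarded set at the end of step $i$ is exactly a bag of the path decomposition naturally associated with $\cI$, and such a bag separates the vertices cleared so far from those still contaminated, both in the interval graph of $\cI$ and therefore in $G$, so recontamination is impossible. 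A step-by-step count of searchers against the intervals covering the corresponding points then gives $\costStrategy{\cS}=\icost(\cI)=\icost(G)$, and combining the two inequalities proves the theorem.

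The step I expect to be the main obstacle is precisely this monotonicity check for the strategy built from $\cI$: one must argue that the ``natural'' guarded set prescribed at each step is already closed under the recontamination rule of axiom~\ref{axiom:4}, so that clearing proceeds without ever losing a cleared vertex. This uses the interval structure in an essential way (chordality alone would not suffice --- consistent with the fact that the active game corresponds to interval, not merely chordal, completions), and it is what upgrades the one-sided bound of the second paragraph into an equality. Once that is established, reconciling $\sum_i\card{Z_i}$ with $\sum_i f_i(\cI)=\icost(\cI)$ is routine bookkeeping, and Proposition~\ref{pro:interval-profile} finishes the argument.
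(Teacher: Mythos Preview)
The paper does not prove this theorem: it is quoted from \cite{interval_completion} and closed with a \qed, so there is no in-paper argument to compare against. Your two-direction correspondence between active monotone strategies and canonical interval representations is the standard proof of this equivalence and is correct as outlined; the only superfluous ingredient is the appeal to Proposition~\ref{pro:interval-profile}, which you do not actually need since the claim is already phrased in terms of $\icost(G)$ rather than $\profileG{G}$.
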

\begin{theorem}[\cite{Kinnersley92,KirousisPapadimitriou85,searching_and_pebbling,Mohring90}]
For each graph $G$, if $\cS$ an active search strategy that uses the minimum number of searchers, then $\nsn(\cS)=\pw{G}+1$.
\qed
\end{theorem}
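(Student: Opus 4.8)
The plan is to prove the two inequalities $\min_{\cS}\nsn(\cS)\le\pw{G}+1$ and $\min_{\cS}\nsn(\cS)\ge\pw{G}+1$, where $\cS$ ranges over all active search strategies for $G$; together these say precisely that a minimum-searcher active strategy uses exactly $\pw{G}+1$ searchers. I would run the whole argument through the interval-width reformulation of pathwidth, i.e. the identity $\iwid(G)=\pw{G}+1$ that follows from Proposition~\ref{pro:interval-pathwidth}, since canonical interval representations and search strategies are essentially the same object read in two different ways.

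For the upper bound I would take a canonical interval representation $\cI\in\representation(G)$ of some interval supergraph of $G$ with $\wid(\cI)=\iwid(G)$ and build a strategy by sweeping a point from left to right across the real line. Each time the sweep passes a left endpoint $\leftendpoint{\cI(v)}$ we place a searcher on $v$, and each time it passes a right endpoint $\rightendpoint{\cI(v)}$ we remove it; this yields a sequence $(A_0,Z_0),\ldots,(A_m,Z_m)$ in which $Z_i$ is the set of vertices whose interval contains the current point and $A_i$ is the set of vertices whose interval lies to the left of, or contains, that point. Axioms~\ref{axiom:1}--\ref{axiom:3} are immediate from this description, and axiom~\ref{axiom:4} holds because every edge $\{u,v\}\in E(G)$ is witnessed by $\cI(u)\cap\cI(v)\ne\emptyset$, so at each step the cleared set is separated from the contaminated set by the guarded set $Z_i$ and no recontamination occurs. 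Since $\card{Z_i}=m_{p}(\cI)$ for the point $p$ reached at step $i$, we get $\nsn(\cS)\le\wid(\cI)=\iwid(G)=\pw{G}+1$.

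For the lower bound I would show that every active search strategy $\cS=((A_0,Z_0),\ldots,(A_m,Z_m))$ for $G$ satisfies $\nsn(\cS)\ge\iwid(G)$, by extracting from $\cS$ a path decomposition (equivalently, an interval representation of an interval supergraph of $G$) of width $\nsn(\cS)-1$. The natural candidate for the bags is the sequence of guarded sets $Z_0,\ldots,Z_m$, after discarding empty or repeated entries: coverage of vertices and of edges follows from axioms~\ref{axiom:2} and~\ref{axiom:4}, and the width of this sequence is $\max_i\card{Z_i}-1=\nsn(\cS)-1$. The obstruction is the interpolation (consecutive-intersection) property of path decompositions, which need not hold unless $\cS$ is \emph{monotone}, i.e. $A_0\subseteq A_1\subseteq\cdots\subseteq A_m$. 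Hence the heart of the argument is the monotonicity lemma for invisible active node search: any strategy using $s$ searchers can be turned into a monotone one using at most $s$ searchers. I would prove this by the standard potential argument, tracking the function that assigns to a cleared set its boundary size and using submodularity of the separator function to show that a step which recontaminates vertices can always be replaced by one that does not without increasing the searcher count; alternatively, it can be quoted directly from \cite{KirousisPapadimitriou85} (see also the references cited alongside Proposition~\ref{pro:interval-pathwidth}, e.g. \cite{LaPaugh93}). Given a monotone $\cS$, the sets $Z_i$ do satisfy interpolation, producing a path decomposition of width $\nsn(\cS)-1$, whence $\pw{G}\le\nsn(\cS)-1$, i.e. $\nsn(\cS)\ge\pw{G}+1$.

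I expect the monotonicity lemma to be the main obstacle: the upper bound and the extraction of bags from a monotone strategy are routine once the axioms are unwound, but excluding any advantage from deliberate recontamination is the non-trivial combinatorial core (and is the reason the classical statement has several independent proofs in the literature). A secondary, purely technical nuisance is the endpoint bookkeeping in the sweep for the upper bound --- handling open versus closed endpoints and ties among endpoints in the canonical representation --- which is harmless but should be spelled out so that the events, and hence the steps of $\cS$, are unambiguously ordered.
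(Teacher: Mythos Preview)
The paper does not prove this theorem at all: it is stated with citations to \cite{Kinnersley92,KirousisPapadimitriou85,searching_and_pebbling,Mohring90} and closed with \verb|\qed|, so there is nothing to compare against. Your proposal is a correct outline of the classical argument found in those references --- sweep an optimal interval representation to get the upper bound, and for the lower bound invoke monotonicity (the LaPaugh/Kirousis--Papadimitriou result) and then read off a path decomposition from the guarded sets $Z_i$ of a monotone strategy. Your identification of the monotonicity lemma as the only nontrivial ingredient is accurate.
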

Hence we obtain the following equivalence:
\begin{theorem} \label{thm:searching-pathwidth}
An optimal solution to Problem~$\problemPP$ corresponds to an active search strategy that simultaneously minimizes the number of searchers and the cost.
\qed
\end{theorem}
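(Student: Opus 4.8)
The plan is to make explicit the classical dictionary between canonical interval representations and monotone active node‑search strategies that already underlies the two theorems quoted above, and to observe that along this dictionary the two criteria of Problem~$\problemPP$ turn into the two criteria $\nsn(\cdot)$ and $\costStrategy{\cdot}$ of a search strategy. In one direction: given an interval supergraph $G'$ of $G$ and a canonical $\cI\in\representation(G')$ with left endpoints $1,\ldots,n$, order the vertices as $v_1,\ldots,v_n$ so that $\leftendpoint{\cI(v_s)}=s$ and build a strategy $\cS_{\cI}$ with $m=n$ in which the cleared set after step $i$ is $A_i=\{v_1,\ldots,v_i\}$ and the guarded set $Z_i$ consists of $v_i$ together with every $v_s$, $s<i$, whose interval still meets an interval that has not yet been opened (equivalently, whose neighbourhood in $G'$ is not contained in $A_i$). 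One checks Axioms~\ref{axiom:1}--\ref{axiom:4}: Axiom~\ref{axiom:3} holds because $A_i\setminus A_{i-1}=\{v_i\}$ and, the left endpoints being pairwise distinct, $Z_i\subseteq A_{i-1}\cup\{v_i\}$; Axiom~\ref{axiom:4} holds because the set of intervals crossing a cut of $\cI$ separates, in $G'$, every already‑cleared vertex from the contaminated part; and $\cS_{\cI}$ is monotone since $A_{i-1}\subseteq A_i$. In the other direction, from a monotone active strategy one reads off for each $v_i$ an interval whose left endpoint records the step at which $v_i$ is placed and whose right endpoint records the last step at which $v_i$ (or a later‑placed neighbour of it) is still guarded; the graph of the resulting representation is an interval supergraph of $G$ because two vertices adjacent in $G$ must be co‑guarded at some common step. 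These two maps are mutually inverse up to a normalisation of right endpoints, so they identify $\representation(G)$ with the set of monotone active search strategies of $G$.

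The second step is to match the two numerical parameters. For $\cS_{\cI}$ a direct count gives that $\card{Z_i}$ is the number of intervals of $\cI$ crossing the cut associated with step $i$, whence $\nsn(\cS_{\cI})=\max_i\card{Z_i}=\wid(\cI)$, and that the total $\costStrategy{\cS_{\cI}}=\sum_i\card{Z_i}$ equals $\icost(\cI)$; here the key identity is that $\icost(\cI)$ is exactly the number of edges of the interval graph represented by $\cI$. Recalling that every canonical representation of a fixed interval graph $G'$ has width $\iwid(G')$ and interval cost $\card{E(G')}$, a solution $G'$ of Problem~$\problemPP$ for input $(k,c)$ is, via the dictionary and the node‑search monotonicity theorems (recontamination helps neither the number of searchers nor the cost, so it suffices to consider monotone strategies), the same datum as an active search strategy $\cS$ with $\nsn(\cS)\le k$ and $\costStrategy{\cS}\le c$; and Propositions~\ref{pro:interval-profile} and~\ref{pro:interval-pathwidth} recast the optimal thresholds as $\iwid(G)=\pw{G}+1$ and $\icost(G)=\profileG{G}$, so that the two theorems quoted above are precisely the two extremal instances. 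It follows that the (Pareto‑)optimal solutions of Problem~$\problemPP$ are exactly the active search strategies that simultaneously minimise the number of searchers and the cost.

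The part I expect to be the real obstacle is not writing down either map but the two ``without loss of generality'' reductions hidden in the informal word ``corresponds'': first, that a non‑monotone active strategy never beats a monotone one for the \emph{cost} criterion (for the number of searchers this is the classical monotonicity theorem, but it is the cost version that upgrades the correspondence to a bijection onto \emph{all} relevant strategies), and second, pinning down the off‑by‑one index conventions so that $\card{Z_i}$ equals the number of intervals crossing the correct cut and $\sum_i\card{Z_i}=\icost(\cI)$ hold on the nose with open intervals and with the requirement $v_i\in Z_i$. Everything else is a routine, if somewhat tedious, verification against the axioms of Section~\ref{subsec:search-defs}.
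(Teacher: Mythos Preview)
The paper gives no proof at all: the theorem is stated with a bare \qed\ immediately after ``Hence we obtain the following equivalence'', so it is meant as a direct restatement of the two cited theorems (minimum monotone active cost $=\icost(G)$ and minimum active searchers $=\pw{G}+1$) in the language of Problem~$\problemPP$. Your proposal is correct and takes essentially the same route, except that you unpack in detail the bijection between canonical interval representations and monotone active strategies that those two cited results rest on, whereas the paper simply invokes them as black boxes.

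Your extra work is not wasted: the two subtleties you flag are real. The cost--monotonicity question (whether a non-monotone active strategy can have strictly smaller cost than every monotone one) is genuinely needed if the word ``active search strategy'' in the theorem is to mean \emph{all} such strategies rather than only monotone ones, and the paper's cited theorem only speaks of monotone strategies; the paper's informal phrasing glosses over this. Likewise, matching $\card{Z_i}$ to the right $m_j(\cI)$ so that $\sum_i\card{Z_i}=\icost(\cI)$ and $\max_i\card{Z_i}=\wid(\cI)$ hold exactly does require care with the open-interval and $v_i\in Z_i$ conventions, as you note; with the paper's conventions the natural choice is to take $Z_i$ to be the vertices of $A_i$ with a $G'$-neighbour outside $A_i$ together with $v_i$, and then check that $\card{Z_i}$ equals the number of intervals containing a suitably chosen point between $i$ and $i+1$. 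None of this is difficult, but it is exactly the bookkeeping you anticipated.
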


For the second pair of parameters, we recall the following theorems.
\begin{theorem}[\cite{FominHT04}]
For each graph $G$, if $\cS$ an inert monotone search strategy of minimum cost, then $\costStrategy{\cS}=\card{E(G)}+\fillin{G}$.
\qed
\end{theorem}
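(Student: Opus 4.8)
The plan is to read this as the chordal / tree-decomposition counterpart of the immediately preceding theorem of \cite{interval_completion} (which handles the active fugitive, interval supergraphs, and $\icost$), and to prove it by two matching inequalities. First recall that $\card{E(G)}+\fillin{G}$ is exactly the minimum number of edges over all chordal supergraphs of $G$; fix a minimal chordal supergraph $H$ attaining this value. It then suffices to show (a) that some inert monotone search strategy $\cS$ has $\costStrategy{\cS}\le\card{E(H)}$, and (b) that every inert monotone search strategy $\cS$ gives rise to a chordal supergraph of $G$ with at most $\costStrategy{\cS}$ edges. Combining (a) and (b), the strategy of minimum cost in the statement must have cost exactly $\card{E(H)}=\card{E(G)}+\fillin{G}$.

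For (a): fix a perfect elimination ordering of $H$ and the associated elimination tree $T$ (the parent of a vertex $v$ is its earliest-eliminated higher $H$-neighbour), noting that $v$ together with its higher $H$-neighbours is a clique. Design the strategy to sweep $T$ from its leaves toward its root: when $v$ is processed, keep searchers on $v$ and on the still-relevant part of its higher-neighbour clique, and drop any searcher that no longer separates a cleared vertex from a contaminated one. Since subtrees of $T$ hanging off a common vertex interact in $H$ only through that vertex's clique, one checks that the cleared set grows monotonically, that the strategy is valid --- here the inert recontamination axiom~\ref{axiom:4'} is what is used, the only possible re-entry point for the lazy fugitive when $v$ is cleared being $v$ itself, which is guarded --- and that the total over all steps of the guarded-set sizes works out to $\card{E(H)}$.

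For (b): given an inert monotone strategy $\cS=((A_0,Z_0),\ldots,(A_m,Z_m))$, take as bags the guarded sets $Z_i$ (augmented with the newly cleared vertex where necessary) and link them into a tree that follows the way the cleared region spreads; the point is that, for an \emph{inert} fugitive, recontamination can only emanate from the single vertex $v_i$, so the cleared frontier splits into essentially independent pieces and the natural structure on the bags is a tree rather than a path. One then verifies the three tree-decomposition axioms: every vertex lies in some bag; both ends of every edge of $G$ are simultaneously guarded at some step (the later-cleared endpoint being adjacent to still-contaminated territory at the moment it is cleared); and the bags containing any fixed vertex form a connected subtree. The chordal supergraph $H'$ whose clique tree is this tree decomposition then satisfies $\card{E(H')}\le\sum_{i}\card{Z_i}=\costStrategy{\cS}$, which is what is needed.

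I expect the bulk of the work, and the only genuinely delicate point, to be in direction (b): extracting an honest tree (and proving the interpolation / connectivity axiom) from the merely sequential list of snapshots $(A_i,Z_i)$, while at the same time keeping the edge count of the resulting chordal graph bounded by $\costStrategy{\cS}$. This is exactly the place where inertness is indispensable --- for an active fugitive the same construction collapses to a path decomposition and yields the profile / $\icost$ statement of the preceding theorem, whereas the lazy behaviour of the fugitive is what creates the branching that corresponds to general chordal (rather than interval) completions and hence to fill-in.
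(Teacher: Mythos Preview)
The paper does not prove this theorem at all: it is quoted from \cite{FominHT04} and closed immediately with \qed, like the other four theorems in Section~\ref{sec:consequences}. There is therefore no ``paper's own proof'' to compare your proposal against; the statement is used as a black box to justify Theorem~\ref{thm:searching-treewidth}.

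For what it is worth, your two-inequality outline is the standard route and is essentially what \cite{FominHT04} does. Two small technical cautions if you intend to flesh it out. In~(a), the usual construction searches in \emph{reverse} perfect elimination order (root of the elimination tree first, leaves last): when $v$ is reached, its higher $H$-neighbours are already clear and form the guard set, which is exactly what axiom~\ref{axiom:3} demands of $Z_i\subseteq A_{i-1}\cup\{v_i\}$. In~(b), the sentence ``$\card{E(H')}\le\sum_i\card{Z_i}=\costStrategy{\cS}$'' hides the only real work: the naive bound from making each bag a clique is $\sum_i\binom{\card{Z_i}}{2}$, which is far too large. The correct accounting charges each edge of $H'$ to the unique step at which its later-cleared endpoint is $v_i$; monotonicity forces the earlier endpoint into $Z_i$, and summing gives the desired linear bound. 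This is precisely the step where monotonicity (no recontamination) is used, and it is also the reason the open problem at the end of the paper --- whether recontamination can help for inert cost --- is nontrivial.
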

\begin{theorem}[\cite{SeymourThomas93}]
For each graph $G$, if $\cS$ an inert search strategy that uses the minimum number of searchers, then $\nsn(\cS)=\treewidth{G}+1$.
\qed
\end{theorem}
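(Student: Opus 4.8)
The plan is to prove the two matching inequalities $\min_{\cS}\nsn(\cS)\leq\treewidth{G}+1$ and $\min_{\cS}\nsn(\cS)\geq\treewidth{G}+1$, where the minimum ranges over all inert search strategies $\cS$ for $G$; together they give the asserted equality for an optimal $\cS$. The first inequality is the constructive, comparatively easy direction, and the second is where the real difficulty lies, forcing an appeal to the obstruction theory for treewidth.

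For the upper bound I would start from an optimal tree decomposition $(T,\{X_t\})$ of $G$ of width $\treewidth{G}$ and turn it into an inert strategy by a postorder sweep of $T$. Rooting $T$ arbitrarily, the searchers occupy the vertices of the bag at the current frontier between the already-processed subtrees and the rest of the graph, clearing one new vertex $v_i$ at a time as prescribed by axiom~\ref{axiom:3}; when the frontier advances, the guards on vertices not shared by the two incident bags are lifted. Since every bag has at most $\treewidth{G}+1$ vertices, at no moment are more than $\treewidth{G}+1$ searchers present. The crucial point, and the reason one obtains treewidth rather than pathwidth here, is that recontamination in the inert model emanates only from $v_i$ (axiom~\ref{axiom:4'}): a routine check using the separator property of the bags shows that every path from a previously cleared vertex to $v_i$ crosses the guarded frontier, so no cleared vertex is recontaminated, $A_i$ grows monotonically, and the resulting strategy witnesses $\min_{\cS}\nsn(\cS)\leq\treewidth{G}+1$.

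The lower bound is the main obstacle, and I would route it through the haven/bramble duality for treewidth. Concretely, I would (i) invoke the min--max theorem that the maximum order of a haven (equivalently, a bramble) in $G$ equals $\treewidth{G}+1$, and (ii) argue that a haven of order $\treewidth{G}+1$ provides an evasion strategy defeating any $\treewidth{G}$ searchers, thereby forcing $\min_{\cS}\nsn(\cS)\geq\treewidth{G}+1$ against arbitrary, possibly non-monotone, searcher strategies. The delicate part of (ii) is that the inert fugitive is invisible and can be forced to flee only from the single vertex $v_i$, so the haven argument is cleanest for the \emph{visible, active} fugitive; I would therefore pass through the Seymour--Thomas equivalence between the invisible-inert game and the visible-active game, which is simultaneously where the uselessness of recontamination is established. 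The two genuinely hard ingredients are thus the treewidth--haven min--max duality and this game equivalence; granting them, a haven of order $\treewidth{G}+1$ defeats $\treewidth{G}$ searchers in the visible-active game and hence, by the equivalence, in the invisible-inert game, which closes the proof.
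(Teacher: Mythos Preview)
The paper does not give its own proof of this theorem: it is quoted as a known result from \cite{SeymourThomas93} (cf.\ also \cite{DendrisKT97}) and immediately closed with a \qed. Your outline is a faithful high-level sketch of exactly the argument in that literature---the easy direction by turning an optimal tree decomposition into a monotone strategy bag by bag, and the hard direction via the haven/bramble min--max duality together with the equivalence of the invisible-inert and visible-active games---so there is nothing to compare against in the present paper, and your proposal is in line with the cited source.
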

This leads us to the following theorem:
\begin{theorem} \label{thm:searching-treewidth}
An optimal solution to Problem~$\problemTF$ corresponds to an inert search strategy that simultaneously minimizes the number of searchers and the cost.
\qed
\end{theorem}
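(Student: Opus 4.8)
The plan is to follow the same route as in Theorem~\ref{thm:searching-pathwidth}, with ``interval'' replaced by ``chordal'' and the pathwidth/profile pair replaced by treewidth/fill-in. The crux is a single correspondence between inert search strategies for $G$ and chordal supergraphs of $G$: to each inert strategy $\cS$ one associates a chordal supergraph $G'$ of $G$ with $\nsn(\cS)=\cliqueSize{G'}$ and $\costStrategy{\cS}=\card{E(G')}$, and conversely each chordal supergraph arises this way. Granting this, the two theorems quoted just above become precisely the two extremal instances of the correspondence: the inert strategies minimizing the number of searchers are exactly those whose associated $G'$ realizes $\cliqueSize{G'}=\treewidth{G}+1$ (by \cite{SeymourThomas93}), and the inert strategies minimizing the cost are exactly those whose associated $G'$ realizes $\card{E(G')}=\card{E(G)}+\fillin{G}$ (by \cite{FominHT04}). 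The theorem then follows because the two bi-criteria problems literally become the same problem.

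First I would make the correspondence precise for \emph{monotone} inert strategies. By axiom~\ref{axiom:3} every step clears exactly one new vertex, so after contracting the pure searcher-removal steps a monotone inert strategy $\cS$ induces a linear ordering $v_1,\ldots,v_n$ of $V(G)$; let $G'$ be the graph obtained from $G$ by turning, for each step $i$, the guarded set $Z_i$ into a clique. As in \cite{FominHT04}, one checks that $G'$ is then a chordal supergraph of $G$ and that $v_1,\ldots,v_n$ is a perfect elimination ordering of $G'$, and conversely that every perfect elimination ordering of a chordal completion of $G$ arises from a monotone inert strategy. Under this bijection each $Z_i$ is a clique of $G'$, hence $\card{Z_i}\le\cliqueSize{G'}$ with equality for some $i$, giving $\nsn(\cS)=\cliqueSize{G'}$; and a counting argument parallel to the one showing that the interval cost $\icost(\cI)$ equals the number of edges of the interval graph with representation $\cI$ shows that in $\sum_{i}\card{Z_i}$ every edge of $G'$ is accounted for exactly once, giving $\costStrategy{\cS}=\card{E(G')}$.

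The second step, and the one I expect to be the main obstacle, is to pass from arbitrary inert strategies to monotone ones. What is needed is not merely that the least number of searchers and the least cost are each attained by \emph{some} monotone inert strategy --- the two classical one-parameter monotonicity facts --- but that the entire Pareto frontier of the pair $(\nsn(\cS),\costStrategy{\cS})$ over inert strategies $\cS$ is already realized by monotone strategies, so that a Pareto-optimal strategy may be taken monotone without worsening either coordinate. I would obtain this by the usual recontamination-elimination argument: starting from a strategy with a recontaminating step, excise the recontamination and check that doing so neither creates a new guarded set nor enlarges any existing one, so that both $\nsn(\cS)$ and $\costStrategy{\cS}$ can only decrease; iterating produces a monotone strategy that dominates the original in both coordinates.

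Combining the two steps, the set of pairs $(\nsn(\cS),\costStrategy{\cS})$ realized by inert strategies $\cS$ coincides with the set of pairs $(\cliqueSize{G'},\card{E(G')})$ realized by chordal supergraphs $G'$ of $G$, that is, with the set of $(k,c)$ for which Problem~$\problemTF$ admits a solution attaining both bounds. Hence a solution that is optimal for Problem~$\problemTF$ --- in particular one with $k=\treewidth{G}+1$, or one with $c=\card{E(G)}+\fillin{G}$, at the two extremes of the tradeoff --- corresponds to an inert search strategy that simultaneously minimizes, in the same Pareto sense, the number of searchers and the cost; and, by Corollary~\ref{cor:orthogonal-tw}, for some graphs no single strategy makes both quantities minimum at once, so the correspondence is exactly one between the two tradeoff problems.
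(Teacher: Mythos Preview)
The paper offers no proof: the theorem is stated as an immediate consequence of the two cited results, with nothing beyond the end-of-proof symbol. Your proposal is therefore not a different proof but an attempt to make precise what the paper leaves implicit, and in doing so you go further than the paper does---and further than is currently possible.

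Your step~1 is essentially the content of the two cited theorems and is fine. The problem is step~2. You assert that an arbitrary inert strategy can be monotonized without increasing \emph{either} the number of searchers \emph{or} the cost, and you propose to show this by ``the usual recontamination-elimination argument''. But this is exactly the open problem posed in Section~\ref{sec:open-problems}: whether there exists a graph $G$ admitting an inert strategy of cost strictly below $\card{E(G)}+\fillin{G}$. If your step~2 held, the answer would be no and that problem would be settled. The classical monotonization arguments (LaPaugh, Bienstock--Seymour, Seymour--Thomas, Dendris--Kirousis--Thilikos) control $\max_i\card{Z_i}$, not $\sum_i\card{Z_i}$; the cost parameter for inert search is not known to be monotone in this sense, and your sketch (``excise the recontamination and check that doing so neither creates a new guarded set nor enlarges any existing one'') does not go through as stated.

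The intended reading of the theorem is the one that makes it an actual corollary: the correspondence is with \emph{monotone} inert strategies. The cost result from \cite{FominHT04} is explicitly stated only for monotone strategies, and the number-of-searchers result is already known to be achieved monotonically. With that reading, your step~1 alone suffices, and step~2 is neither needed nor---as far as anyone knows---available.
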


\section{Open problems} \label{sec:open-problems}

The first open problem we leave is the one of existence of a similar tradeoff between fill-in and treewidth to the one we have in Theorem~\ref{thm:tradeoff}.
More particularly, is it possible to find chordal supergraphs that approximate both parameters to within constant factors of their optimal values?

A challenging open problem is the one that refers to the concept of recontamination in the graph searching games that has been posed in~\cite{FominHT04}: does recontamination help to obtain a minimum-cost inert search strategy?
Formally, does there exist, for some graph $G$, an inert search strategy whose cost is smaller than $\card{E(G)}+\fillin{G}$?
In yet other words, does there exist a graph for which an inert search strategy that minimizes the cost must necessarily allow for recontamination and as a result some vertex $v$ is searched twice in step \ref{axiom:3}, i.e., $v=v_i$ for two different indices $i$?

\bibliographystyle{plain}
\bibliography{search}
\end{document}